\newtheorem{thm}{Theorem}[section]
\newtheorem{lemma}[thm]{Lemma}
\newtheorem{defin}[thm]{Definition}
\newtheorem{rem}[thm]{Remark}
\newtheorem{exam}[thm]{Example}
\newtheorem{question}[thm]{Question}
\newcommand{\R}{{\mathbb{R}}}
\newcommand{\N}{{\mathbb{N}}}
\newcommand{\C}{{\mathbb{C}}}
\newcommand{\cB}{{\mathcal{B}}}
\newcommand{\cL}{{\mathcal{L}}}
\newcommand{\cS}{{\mathcal{S}}}
\def\id{{1\hskip-2.5pt{\rm l}}}
\newcommand{\cC}{{\mathcal{C
}}}
\newcommand{\tr}{{\rm tr}}
\DeclareMathOperator{\sgrad}{sgrad}
\newcommand{\End}{\textup{End}}
\newcommand{\Vol}{\textup{Vol}}
\newcommand{\Int}{\displaystyle{\int}}
\newcommand{\om}{\omega}
\newcommand{\Ga}{\Gamma}
\newcommand{\bigo}{\mathcal{O}}
\newcommand{\op}{{\textrm{op}}}
\begin{document}
\title{Berezin-Toeplitz quantization and the least unsharpness principle}
\author{Louis Ioos$^1$,
David Kazhdan and Leonid Polterovich}
\maketitle
\newcommand{\Addresses}{{
  \bigskip
  \footnotesize

  \textsc{School of Mathematical Sciences, Tel Aviv University, Ramat Aviv, Tel Aviv 69978
Israel}\par\nopagebreak
  \textit{E-mail address}: \texttt{louisioos@mail.tau.ac.il}
	
	\medskip
	
	\textsc{Einstein Institute of Mathematics, Hebrew University, Givat Ram, Jerusalem, 91904, Israel}\par\nopagebreak
  \textit{E-mail address}: \texttt{kazhdan@math.huji.ac.il}
	\medskip
	
	  \textsc{School of Mathematical Sciences, Tel Aviv University, Ramat Aviv, Tel Aviv 69978
Israel}\par\nopagebreak
  \textit{E-mail address}: \texttt{polterov@tauex.tau.ac.il}

}}

\footnotetext[1]{Partially supported by the European Research Council Starting grant 757585}

\begin{abstract}
We show that compatible almost-complex structures on symplectic manifolds  correspond to optimal quantizations.
\end{abstract}

\tableofcontents

\section{Introduction}
A Riemannian metric on a symplectic manifold $(M,\omega)$ is
$\omega$-compatible if it can be written as
\begin{equation}
\label{eq-metricg}
g_{\omega,J}(\cdot,\cdot):= \omega(\cdot,J\cdot)\;,
\end{equation}
where $J$ is an almost complex structure on $M$. Vice versa, an almost complex structure
$J$ is $\omega$-compatible if the bilinear form \eqref{eq-metricg}
is a Riemannian metric. Compatible geometric structures were introduced as an effective auxiliary tool for detecting rigidity phenomena on symplectic manifolds \cite{Gromov}. In the present paper we show that these structures naturally arise from the perspective of mathematical physics. Loosely speaking, they correspond to ``optimal" quantizations, the ones minimizing a natural physical quantity called \emph{unsharpness},
which is one of the main characters of this paper (see Section
\ref{sec-coc} below).

Quantization is a mathematical recipe behind the quantum-classical correspondence, a fundamental physical principle stating that quantum mechanics contains classical mechanics in the limiting regime when the Planck constant $\hbar$ tends to zero \cite{D}. There exist two different, albeit related mathematical models of this principle. Assume that the classical phase
space is represented by a closed (i.e., compact without boundary) symplectic manifold $(M,\omega)$. The first model,  {\it deformation quantization}, is a formal associative deformation
$$f*g= fg+\hbar c_1(f,g) + \hbar^2 c_2(f,g) + \cdots$$
of the multiplication on the space $C^\infty(M)$ of
smooth real functions on $M$ such that
$f*g-g*f= i\hbar\{f,g\}+\bigo(\hbar^2)$,  where $\{\cdot,\cdot\}$ stands for the Poisson bracket \cite{BFFLS}. The operation $*$ is called {\it the star-product} and the Planck constant $\hbar$ plays the role of a formal deformation parameter.

The second model, {\it geometric quantization}, is described as a linear correspondence $f \mapsto T_\hbar(f)$
between classical observables, i.e., real functions
$f$ on the phase space $M$, and quantum observables, i.e., Hermitian operators on a complex Hilbert space. This correspondence is assumed to respect, in the leading order as $\hbar\to 0$, a number of basic operations. In the present paper, we focus on {\it Berezin-Toeplitz quantizations} \cite{Berezin,BMS,Gu,BU,MM,Sc,C},
whose distinctive feature is to send non-negative functions to non-negative operators (see Section \ref{sec-bt}). The known models of Berezin-Toeplitz
quantization on closed symplectic manifolds
(see the discussion following Theorem \ref{BQ}) determine a deformation quantization \cite{BMS, Sch, Gu}, and are provided by certain auxiliary data involving in particular
an almost complex structure $J$  compatible  with the symplectic form on the phase space.
While deformation quantizations of closed symplectic manifolds are
completely classified up to a natural equivalence, the
classification
of Berezin-Toeplitz quantizations is not yet completely
understood (see however \cite{KS01} for the relation between the two).

The main finding of the present paper is that conversely,
any Berezin-Toeplitz quantization, defined through natural axioms
presented in Section \ref{sec-bt},
gives rise in a canonical way to a Riemannian metric
on the phase space. Specifically, we make the natural assumption
that there exists of a
complex-valued bi-differential operator
$c: C^\infty(M) \times C^\infty(M) \to C^\infty(M,\C)$ such that
the $\C$-linear extension of $T_\hbar$ satisfies
\begin{equation}\label{eq-expan}
T_\hbar(f)T_\hbar(g) = T_\hbar(fg)+ \hbar\,T_\hbar(c(f,g))+ \bigo(\hbar^2)\;,
\end{equation}
for all $f,\,g\in C^\infty(M)$ as $\hbar\to 0$, and
we show that this induces a Riemannian metric $G$ on $M$
by the formula
$$c(f,g) = -\frac{1}{2}G(\sgrad f, \sgrad g)\,,$$
where $\sgrad f$ stands for the Hamiltonian vector field of a function $f$ on $M$.
(see Theorem \ref{thm-main-new-1}\,(I) below).
%

Leaving precise definitions for Section \ref{sec-main}, let us discuss the above-mentioned results informally and present a motivation coming from physics. To this end recall that it is classically known, starting from the Groenewold-van Hove theorem, that a Berezin-Toeplitz correspondence cannot be a genuine morphism between the Lie algebras of functions and operators. We focus on yet another constraint on the precision of Berezin-Toeplitz quantizations, which we call {\it unsharpness}, and which is governed by the Riemannian metric $G$ defined above. The notion of unsharpness is closely related to the Heisenberg uncertainty principle. It comes from an analogy between quantization and measurement based on the formalism of positive operator valued measures (POVMs), which serves both subjects, and which we briefly recall in Section \ref{sec-bt}. The unsharpness metric is a particular instance of the noise operator \cite{Busch} describing, loosely speaking, the increment of variances in the process of quantization
(see the discussion p.13).

In this language, we propose {\it the least unsharpness principle}, a variational principle
selecting quantizations whose unsharpness metric
has minimal possible volume on phase space. It turns out that the least unsharpness volume equals the symplectic volume
(see Theorem \ref{thm-main-new-1}\,(II) below),
so that a quantization satisfying the least unsharpness
principle determines a compatible almost complex structure $J$
on $(M,\om)$, in the sense that its unsharpness metric
satisfies $G=g_{\om,J}$ as in
\eqref{eq-metricg}. We refer to Section \ref{sec-coc} for basic properties of unsharpness, while existence of the unsharpness metric and the least unsharpness principle are stated in Section \ref{sec-main} and proved in Section \ref{sec-proofs}.

The unsharpness metric is a natural geometric invariant of a Berezin-Toeplitz quantization, and can be seen as a first step towards classification. As a case study, we show in Section \ref{sec-equiv} that for $SU(2)$-equivariant quantizations of the two-dimensional sphere, the
unsharpness metric completely determines the quantization up to
conjugation and up to second order as $\hbar\to 0$. Further comments on classification can be found
in Section \ref{subsec-class}.

Some historical remarks are in order.  A canonical appearance of Riemannian geometry in quantization
was discussed on a number of occasions in physical literature. Anandan and Aharonov \cite{AA} and Ashtekar and Schilling \cite{AS} developed a geometric approach to quantum mechanics based on the Fubini-Study metric on the projective space of pure quantum states. Klauder (see, e.g., \cite{Klauder}) studied a model of a path-integral quantization where the role of a metric was to define a Brownian motion on the phase space. The idea of selecting optimal quantizations as those possessing the least uncertainty goes back to Gerhenstaber \cite{G}. He deals with quantizations which do not necessarily preserve positivity, and his least uncertainty principle implies that unsharpness identically vanishes on some restricted class of observables
(see Section \ref{subsec-ger} for further discussion). Finally, while classification of equivariant quantizations is known in the context of deformation quantization \cite{AL,BBG}, our setting, including
the notion of equivalence, is substantially different.
The case of $SU(2)$-equivariant Berezin-Toeplitz quantizations of the sphere which we settle in Section \ref{sec-equiv} is, to the best of our knowledge, the first one where a complete classification is currently available.

\section{Berezin-Toeplitz quantization}\label{sec-bt}
Given a finite-dimensional complex Hilbert space $H$, we write
$\cL(H)$ for the space of all Hermitian
operators {(representing quantum observables), and
$\cS(H) \subset \cL(H)$ for the subset of all non-negative trace-one Hermitian operators (representing quantum states)}.

\begin{defin}\label{POVMdef}
{\rm
An $\cL(H)$-valued \emph{positive operator valued measure} (POVM)
on a set $M$ equipped with a $\sigma$-algebra $\cC$ is a map $\Ga: \cC \to \cL(H)$ which satisfies the following conditions:
\begin{itemize}
\item $\Ga(\emptyset) = 0~\text{and}~\Ga(M)=\id$\,;
\item $\Ga(X) \geq 0$ for every $X \in \cC$\,;
\item ($\sigma$-additivity) $\Ga\left(\bigsqcup_{i\in\N} X_i\right) = \sum_{i\in\N} \Ga(X_i)$ for any sequence
of pair-wise disjoint subsets $\{X_i \in \cC\}_{i\in\N}$.
    \end{itemize}}
\end{defin}

According to \cite{CDS}, { for
 every $\cL(H)$-valued POVM measure $\Ga$, there exist a probability measure $\alpha$ on $(M, \cC)$ and  a measurable function  $F: M \to \cS(H)$ such that

\begin{equation}\label{eq-POVM-density}
d\Ga = n\,Fd\alpha\;,
\end{equation}
where $n= \dim_{\C} H$.}

\begin{rem}\label{cohstate}
{\rm
In the context of quantization, the state
$F_{x} \in \cS(H)$ is
called the {\it coherent state} associated with $x\in M$,
and describes the quantization of a classical particle
sitting at $x\in M$ in phase space.}
\end{rem}

For any real classical observable $f \in L^1(M,\alpha)$,
we  define the  \emph{quantization} $T(f)$ as the integral
\begin{equation}\label{quantmap}
T(f):= \int_M f\,d\Ga \in \cL(H)\,.
\end{equation}
The dual map $T^*: \cL(H) \to L^\infty(M)$ with respect to the scalar
product $((A,B)):= \tr(AB), A,\,B\in\cL(H)$ satisfies
$T^*(A)(x) = n\,\tr (AF(x))$, for any $x\in M$ and
$A\in\cL(H)$.

\begin{rem}{\rm For a quantum observable $A$, the function
$T^*(A)\in L^\infty(M)$ has a natural interpretation as the classical observable whose value
at $x\in M$ is the expectation value of $A$ at the associated
coherent state $F_x$. Thus, we call $T^*(A)$ the \emph{dequantization} of the quantum
observable $A \in\cL(H)$.}
\end{rem}
\begin{defin}\label{Bdef}
{\rm
The composition
\begin{equation}
\label{eq-BT}
\begin{split}
\cB:= \frac{1}{n}T^*T: L^1(M,\alpha) &\longrightarrow L^\infty(M)\,,\\
f(x)~& \longmapsto ~\tr\left(T(f)F_x\right)
\end{split}
\end{equation}
is called the {\it Berezin transform} associated to the POVM $\Gamma$.}
\end{defin}

\begin{rem}{\rm  The Berezin transform can
be interpreted as quantization followed by dequantization. It is a
measure of the blurring induced by quantization.}
\end{rem}

To study the quantum-classical correspondence, we need to introduce
a parameter $\hbar$ in the above story, which can be thought
as the \emph{Planck constant}, and from which we recover the laws
of classical mechanics as $\hbar\to 0$. This is given a precise
meaning  via the following definition.

\begin{defin}\label{BTdef} {\rm Let $(M,\omega)$ be  a closed connected symplectic manifold of dimension $2d$ and $\cC$ be the $\sigma$-algebra of its Borel sets in $M$.}
 {\rm
A \emph{Berezin-Toeplitz
quantization} of $M$ is the following data:
\begin{itemize}
\item a subset $\Lambda \subset \R_{>0}$ having $0$ as limit
point ;
\item a finite-dimensional complex Hilbert space $H_\hbar$
for each $\hbar \in \Lambda$ ;
\item an $\cL(H_\hbar)$-valued positive operator valued measures $\Ga_\hbar$ on $M$
for each $\hbar\in\Lambda$,
\end{itemize}
such that the \emph{Toeplitz map}
$T_\hbar: C^\infty(M,\C) \to \End(H_\hbar)$ induced for all
$\hbar\in\Lambda$ by the quantization
map \eqref{quantmap}
is surjective and satisfies the following
estimates, {\it uniformly} in the $C^N$-norms of
$f,\,g\in C^\infty(M)$ for some $N\in\N$ :
\begin{itemize}
\item[{(P1)}] {\bf (norm correspondence)} $$\|f\|- \bigo(\hbar) \leq \|T_\hbar(f)\|_{\op} \leq \|f\|\;,$$
    where $\|\cdot\|_{op}$ is the operator norm and $\|f\|:=
    \max_{x\in M} |f(x)|$ ;
\item[{(P2)}] {\bf (bracket correspondence)} $$\left\| -\frac{i}{\hbar} [T_{\hbar}(f),T_{\hbar}(g)] - T_\hbar (\{f,g\})\right\|_{\op} =\bigo(\hbar)\;,$$
    where $[\cdot,\cdot]$ stands for the commutator and
    $\{\cdot,\cdot\}$ for the Poisson bracket\footnote{Our convention for the Poisson bracket is
    $\{f,g\}:=-\om(\text{sgrad} f,\text{sgrad} g)$ for all
    $f,\,g\in C^\infty(M)$, where
    $\text{sgrad} f$ is the Hamiltonian vector field of
    $f$ defined by $\iota_{\text{sgrad} f}\om+df=0$.};
\item[{(P3)}] {\bf (quasi-multiplicativity)} There exists a bi-differential operator\\
$c: C^\infty(M) \times C^\infty(M) \to C^\infty(M,\C)$
such that
$$\|T_\hbar(f)T_\hbar(g) - T_\hbar(fg)-\hbar T_\hbar(c(f,g))\|_{op}= \bigo(\hbar^2)\;.$$
\item[{(P4)}] {\bf (trace correspondence)}  $$  \tr(T_\hbar(f)) = (2\pi\hbar)^{-d}\int_M f\,R_\hbar\, d\mu\;,$$
 where $R_\hbar\in C^\infty(M)$ satisfies $R_\hbar = 1+ \bigo(\hbar)$, and $d\mu = \frac{\omega^d}{d!}$ is the symplectic volume on $M$;
 \item[{(P5)}] {\bf (reversibility)} The maps
 $\cB_\hbar:C^\infty(M)\to C^\infty(M)$ induced by the
 Berezin transform \eqref{eq-BT} satisfy $$\cB_\hbar (f) = f + \bigo(\hbar)\,.$$
\end{itemize}}
\end{defin}

By uniformly in the $C^N$-norms of $f,\,g\in C^\infty(M)$,
we mean that there exists a constant $C>0$ such that,
in axioms (P2) and (P3) and for $k=1,\,2$ respectively, the
remainders satisfy
\begin{equation}
\left|\bigo(\hbar^k)\right|\leq C\hbar^k\,\|f\|_{C^N}\|g\|_{C^N}\,,
\end{equation}
while in axioms (P1) and (P5), the remainders satisfy
$\left|\bigo(\hbar)\right|\leq C\hbar\,\|f\|_{C^N}$.
Writing the density
\eqref{eq-POVM-density} associated to $\Ga_\hbar$ in the form
\begin{equation}\label{eq-POVM-density-1}
d\Ga_{\hbar}(x) = n_\hbar\,F_{\hbar,x}\,d\alpha_\hbar(x)\;,
\end{equation}
the trace correspondence (P4) implies
$$n_\hbar= \frac{\text{Vol}(M,\omega) + \bigo(\hbar)}{(2\pi \hbar)^d}\;,$$
and
\begin{equation}\label{eq-measure}
d\alpha_\hbar = \frac{1+\bigo(\hbar)}
{\text{Vol}(M,\omega)}d\mu\;.
\end{equation}
where $\text{Vol}(M,\om)>0$ denotes the symplectic volume of
$(M,\omega)$.

\bigskip

The existence of a Berezin-Toeplitz quantization is a highly non-trivial result. To discuss it,
recall that an almost complex structure $J$ on $M$ is {$\omega$-{\it compatible} if the form
$G_J:=\omega(\cdot,J\,\cdot)$ is a Riemannian metric on $M$. We refer to $(M,\omega, J, G_J)$ as an \emph{almost-K\"{a}hler structure} on $M$.

\begin{thm}\label{BQ}
If the closed symplectic manifold $(M,\omega)$
is \emph{quantizable}, i.e., if the cohomology class
$[\omega]/(2\pi)$ is integral, then every $\omega$-compatible
almost-complex structure $J$ defines a Berezin-Toeplitz
quantization of $(M,\om)$, with $\Lambda=\{1/k\}_{k\in\N}$.
\end{thm}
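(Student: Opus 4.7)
My plan is to realize the quantization via the standard almost-Kähler Berezin--Toeplitz construction built from the generalized Bergman kernel, and then to check the five axioms by invoking the now-classical off-diagonal and near-diagonal expansions of that kernel.

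\textbf{Step 1: Prequantum data.} The integrality assumption $[\omega]/(2\pi)\in H^2(M,\Z)$ yields a Hermitian line bundle $(L,h^L)\to M$ with a Hermitian connection $\nabla^L$ of curvature $R^{L}=-i\omega$. Set $\hbar=1/k$ and consider the powers $L^{\otimes k}$. The $\omega$-compatible almost complex structure $J$ makes $(M,\omega,J,g_{\omega,J})$ almost-Kähler, so the tangent bundle decomposes into its $\pm i$-eigenbundles for $J$ and we may form the associated Clifford / Spin$^c$ structure.

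\textbf{Step 2: Quantum space.} Following the almost-Kähler setup of Ma--Marinescu, let $D_k$ be the renormalized Spin$^c$-Dirac operator acting on sections of $L^{\otimes k}\otimes \La^{0,\bullet}(T^*M)$ (in the Kähler case this reduces to $\sqrt{2}(\bar\partial+\bar\partial^*)$). Define
\begin{equation*}
H_{1/k} := \ker D_k,
\end{equation*}
which is finite-dimensional with $\dim_\C H_{1/k}=n_k=(2\pi\hbar)^{-d}\Vol(M,\omega)+\bigo(1)$ by the index theorem. Let $P_k\fcolon L^2\to H_{1/k}$ denote the orthogonal projector; its Schwartz kernel $B_k(x,y)$ with respect to the symplectic volume $d\mu$ is the Bergman kernel.

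\textbf{Step 3: POVM and Toeplitz map.} For each $x\in M$, define the coherent state $F_{1/k,x}\in\cS(H_{1/k})$ to be the rank-one projector onto the peak section $B_k(\cdot,x)\in H_{1/k}$, normalized to have trace $1$. Set
\begin{equation*}
d\Ga_{1/k}(x) := n_k\, F_{1/k,x}\, d\alpha_{1/k}(x),\qquad d\alpha_{1/k}:=\tfrac{B_k(x,x)}{n_k}\,d\mu(x),
\end{equation*}
normalized so that $\alpha_{1/k}$ is a probability measure; the completeness relation $\int_M B_k(\cdot,x)\langle\cdot,B_k(\cdot,x)\rangle\,d\mu(x)=\id_{H_{1/k}}$ ensures $\Gamma_{1/k}(M)=\id$. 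With the induced Toeplitz map one recovers the standard formula $T_{1/k}(f)=P_k M_f P_k$ up to $\bigo(1/k)$.

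\textbf{Step 4: Verification of (P1)--(P5).} All five axioms follow from two standard inputs: the diagonal/near-diagonal asymptotic expansion of $B_k$ (Catlin--Zelditch--Dai--Liu--Ma), giving $B_k(x,x)=(2\pi\hbar)^{-d}(1+\bigo(\hbar))$ uniformly in $x$, and the Toeplitz product formula
\begin{equation*}
T_{1/k}(f)T_{1/k}(g)=\sum_{r\geq 0}\hbar^{r}\,T_{1/k}(C_r(f,g))+\bigo(\hbar^\infty),
\end{equation*}
with $C_0(f,g)=fg$ and $C_1(f,g)-C_1(g,f)=-i\{f,g\}$ (Bordemann--Meinrenken--Schlichenmaier, Ma--Marinescu, Schlichenmaier, Charles). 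From these one reads off: norm correspondence (P1) from $\|T_{1/k}(f)\|_{\op}\to\|f\|$ with the classical lower bound via coherent-state localization; the bracket axiom (P2) from the antisymmetric part of the product formula; quasi-multiplicativity (P3) with $c(f,g):=C_1(f,g)$; trace correspondence (P4) with $R_\hbar:=B_k(x,x)/n_k\cdot\Vol(M,\omega)$ from the on-diagonal expansion; reversibility (P5) from the Berezin transform expansion $\cB_{1/k}(f)=f+\bigo(\hbar)$ that follows by stationary phase applied to the near-diagonal kernel. The uniformity in $C^N$-norms of $f,g$ stated in the definition is inherent in these asymptotics, since the expansion coefficients are bi-differential operators of bounded order.

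\textbf{Main obstacle.} None of the steps above is new; the substantive work is the near-diagonal Bergman-kernel expansion and the derivation of the Toeplitz product formula in the almost-Kähler (non-integrable) setting, which is where one must be careful because $\bar\partial^2\neq 0$. However these results are fully developed in the works cited in the paper's introduction, so the proof reduces to invoking them and checking that the POVM normalization $d\Gamma_{1/k}$ introduced in Step~3 produces exactly the quantization map $T_{1/k}$ appearing in those references, up to the $\bigo(\hbar)$ errors permitted by the axioms.
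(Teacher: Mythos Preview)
Your proposal is correct and aligned with the paper's treatment: the paper does not give its own proof of this theorem but simply records it as a known result, attributing the K\"{a}hler case to Bordemann--Meinrenken--Schlichenmaier via Boutet de Monvel--Guillemin, and the almost-K\"{a}hler extension to Guillemin, Borthwick--Uribe, Shiffman--Zelditch, Ma--Marinescu, Charles, and Ioos--Lu--Ma--Marinescu. Your sketch invokes precisely this literature and spells out the underlying construction (prequantum bundle, Spin$^c$ Dirac kernel, Bergman-kernel asymptotics, Toeplitz product expansion), so there is no substantive difference in approach. One small remark: in Step~3 the POVM you write down actually reproduces $P_kM_fP_k$ \emph{exactly} by the reproducing property of the Bergman kernel, not merely up to $\bigo(1/k)$; and in the genuinely almost-K\"{a}hler case you should note (as Ma--Marinescu do) that $\ker D_k$ is asymptotically concentrated in bidegree $(0,0)$, so that the coherent states are indeed rank one for $k$ large.
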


In the case of \emph{K\"{a}hler manifolds}, i.e., if we assume
additionally that the almost complex structure $J$ is integrable,
there is a canonical
construction of a Berezin-Toeplitz quantization,
where the Hilbert spaces
$H_\hbar$ consist of the global \emph{holomorphic sections} of
a
holomorphic Hermitian line bundle with Chern curvature equal to
$-2\pi i k\om$, with $\hbar=1/k$, and the associated Toeplitz map
$T_\hbar$ sends $f\in C^\infty(M)$ to the multiplication by $f$
followed by the orthogonal $L_2$-projection on holomorphic sections.
In this context, Theorem \ref{BQ} has been established by Bordemann,
Meinrenken and Schlichenmaier in \cite{BMS}, using the theory of
Toeplitz structures developed by Boutet de Monvel
and Guillemin in \cite{BdMG81}. The fact that this theory extends
to the almost-K\"{a}hler case was proved in a series of papers by Guillemin \cite{Gu}, Borthwick and Uribe \cite{BU}, Schiffman and Zelditch \cite{SZ02}, Ma and Marinescu \cite{MM}, Charles \cite{C} and the first named author, Lu, Ma and Marinescu
\cite{I2}.
The dependence of the remainders in terms of the derivatives of
the functions is discussed in \cite{CP}.

\section{Unsharpness cocycle} \label{sec-coc}

In this section, we study general properties of the bi-differential
operator $c: C^\infty(M) \times C^\infty(M) \to C^\infty(M)$
from the quasi-multiplicativity property (P3) of a
Berezin-Toeplitz quantization.
First note that norm correspondence
(P1) implies that, if an asymptotic expansion such
as the one appearing in (P3) holds, then it is unique,
and in particular, the bi-differential operator
$c$ is uniquely defined. Then
the associativity of the
composition of operators implies that $c$ is a
\emph{Hochschild cocycle}, meaning that for all
$f_1,\,f_2,\,f_3 \in C^\infty(M)$, we have
\begin{equation}\label{Hochcocycle}
f_1\,c(f_2,f_3)-c(f_1f_2,f_3)+c(f_1,f_2f_3)-c(f_1,f_2)\,f_3=0\;.
\end{equation}
Denote by $c_-$ and $c_+$ its anti-symmetric and symmetric parts, respectively:
$$c_-(f,g) := \frac{c(f,g)-c(g,f)}{2}\quad\text{and}
\quad c_+(f,g) := \frac{c(f,g)+c(g,f)}{2}\;.$$
By the bracket correspondence (P2), we {have
$T_\hbar(2c_-(f,g) - i\{f,g\}) = \bigo(\hbar)$},
and hence by the norm correspondence (P1), we get the formula
\begin{equation}\label{eq-cminus}
c_-(f,g)= \frac{i}{2}\{f,g\}\;.
\end{equation}
Thus the anti-symmetric part $c_-$ (responsible for the  non-commutativity of quantum observables) does not depend  on a choice of a quantization. In contrast , the symmetric part $c_+$
does depend on a choice of a quantization. By the quasi-multiplicativity
property (P3), the cocycle $c_+$ associated to a Berezin-Toeplitz quantization
measures its failure of being a multiplicative morphism on Poisson-commutative subspaces of $C^\infty(M)$.
From formula \eqref{eq-cminus} and basic properties
of the Poisson bracket, we know that $c^-$, hence also $c^+$,
satisfy formula \eqref{Hochcocycle} for a Hochschild cocycle.

\begin{defin}\label{def-unsharp} {\rm We say that
$c_+$ is  the \emph{unsharpness cocycle}
of a quantization or simply its  \emph{unsharpness}.}
\end{defin}}

Note that by formula \eqref{quantmap}, the operator
$T_\hbar(f)\in\End(H_\hbar)$ is Hermitian
if and only if $f\in C^\infty(M,\C)$ is real valued,
and as the square of a Hermitian operator is Hermitian,
the quasi-multiplicativity property (P3) then shows that
$c^+:C^\infty(M)\times C^\infty(M)\to C^\infty(M)$
is a real-valued bi-differential operator.
It is also a symmetric Hochschild cocycle, so that as explained e.g. in
\cite[Prop.\,2.14]{RG}, it is a differential Hochschild
\emph{coboundary}. This means that there exists
a real-valued differential operator
$a:C^\infty(M)\to C^\infty(M)$
such that
\begin{equation}\label{eq-a}
c_+(f,g) = a(fg)-f\,a(g)-g\,a(f)\;.
\end{equation}
for all $f,\,g\in C^\infty(M)$. Since $T_\hbar(1)=\id$, we have that $c_+(1,1)=0$,
and therefore $a(1)=0$. Note that $a$ is determined
up to its first order part.
The following result shows
that the positivity preserving property imposes a strong condition on $c_+$.

\begin{thm}\label{thm-order} The bi-differential operator $c_+$ is of order $(1,1)$.
\end{thm}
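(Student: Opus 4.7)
My plan is to extract from the positivity of the POVMs $\Ga_\hbar$ a pointwise inequality $c_+(f,f) \leq 0$ for every real $f \in C^\infty(M)$, and then to combine this with the coboundary representation $c_+ = \delta a$ from \eqref{eq-a} to show via an oscillating test-function argument that $a$ must have order at most two, which is exactly the assertion that $c_+$ is of bi-order $(1,1)$.

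The first step is an operator noise inequality. By Naimark's dilation theorem each POVM $\Ga_\hbar$ arises as the compression $P_\hbar\widetilde{\Ga}_\hbar P_\hbar$ of a projection-valued measure $\widetilde{\Ga}_\hbar$ on a larger Hilbert space $\widetilde{H}_\hbar \supset H_\hbar$, with $P_\hbar$ the orthogonal projection. Multiplicativity of the induced Toeplitz map on the dilation and compressing back yields $T_\hbar(f^2) \geq T_\hbar(f)^2$ as Hermitian operators for every real $f$. Setting $g = f$ in (P3) and using $c_-(f,f) = 0$ gives $T_\hbar(f^2) - T_\hbar(f)^2 = -\hbar\, T_\hbar(c_+(f,f)) + \bigo(\hbar^2)$, whence $T_\hbar(c_+(f,f)) \leq \bigo(\hbar)$ in the operator sense. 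Pairing with a coherent state $F_{\hbar,x}$ yields $\cB_\hbar(c_+(f,f))(x) \leq \bigo(\hbar)$, and invoking the reversibility axiom (P5) gives $c_+(f,f)(x) + \bigo(\hbar) \leq \bigo(\hbar)$, so $c_+(f,f) \leq 0$ pointwise after letting $\hbar \to 0$.

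The second step is a local symbol analysis. Decompose $a = a_1 + \cdots + a_k$ into pure-order homogeneous summands, and suppose for contradiction that $k \geq 3$ with principal symbol $\sigma_{a_k}$ not identically zero; pick $(x_0, \xi_0) \in T^*M$ with $\sigma_{a_k}(x_0,\xi_0) \neq 0$. In local coordinates centred at $x_0$, probe $c_+$ with the real test functions $f_\lambda(x) := h(\lambda\,\xi_0 \cdot x)$ for a real polynomial $h$ and $\lambda \to \infty$. Iterated Leibniz together with the Vandermonde-type identity $\sum_{|\beta|=i,\,\beta \leq \gamma}\binom{\gamma}{\beta} = \binom{|\gamma|}{i}$ express $c_+(f_\lambda, f_\lambda)(x_0)$ as a polynomial in $\lambda$ of degree $k$ whose leading coefficient is
\[
\sigma_{a_k}(x_0, \xi_0) \cdot \sum_{i=1}^{k-1}\binom{k}{i}\,h^{(i)}(0)\, h^{(k-i)}(0).
\]
The quadratic form in the variables $v_i := h^{(i)}(0)$ on the right is represented by an anti-diagonal matrix with positive entries, which is indefinite as soon as $k \geq 3$. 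Choosing $h$ so that this factor has sign opposite to $\sigma_{a_k}(x_0, \xi_0)$ makes the leading coefficient strictly positive, and hence $c_+(f_\lambda, f_\lambda)(x_0) > 0$ for large $\lambda$, contradicting the first step. Therefore $\sigma_{a_k} \equiv 0$, so $a$ has order at most two and $c_+ = \delta a$ is of bi-order $(1,1)$.

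The main obstacle is the symbol analysis of the second step: one must verify both that the $\lambda^k$-asymptotics really come only from $\delta a_k$, so that $\delta a_{k-1}, \delta a_{k-2}, \ldots$ contribute genuinely lower powers of $\lambda$ and cannot conspire to cancel the leading term, and that the combinatorial quadratic form on the $h^{(i)}(0)$ is indefinite for every $k \geq 3$. The indefiniteness follows because the anti-diagonal block with entries $\binom{k}{1}$ and $\binom{k}{k-1}$ already supplies eigenvalues $\pm k$; notably, the analogous form for $k = 2$ is $2\,h'(0)^2 \geq 0$, which is why the argument stops at order two and simultaneously produces the negativity of $\sigma_{a_2}$ that later underlies the Riemannian metric $G$.
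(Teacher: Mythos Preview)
Your proof is correct and shares the paper's overall architecture: first extract the pointwise inequality $c_+(f,f)\leq 0$ from positivity of the noise operator (this is exactly formula~\eqref{eq-positivity} in the paper, derived there from the same Naimark-dilation circle of ideas), then combine with the coboundary representation $c_+=\delta a$ and a test-function argument to bound the order of $a$ by two.

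The difference lies in the test-function step. The paper chooses a single explicit polynomial $f(Z)=\tfrac{c}{\hat\beta!}Z^{\hat\beta}+Z_j$ with exactly two nonvanishing derivatives at the base point (orders $1$ and $k-1$), so that $c_+(f,f)(x_0)=2a_\beta c+2a_\gamma$ drops out immediately and can be made positive by tuning the real parameter $c$. You instead run a one-parameter oscillating family $f_\lambda(x)=h(\lambda\,\xi_0\!\cdot\! x)$ and isolate the top-order contribution via the principal symbol $\sigma_{a_k}$; the indefiniteness of the anti-diagonal quadratic form $\sum_{i=1}^{k-1}\binom{k}{i}v_iv_{k-i}$ for $k\geq 3$ then supplies the contradiction. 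The paper's route is shorter and avoids any limiting argument in $\lambda$; your symbol-theoretic route is more systematic, makes the role of the principal part transparent, and has the pleasant feature that the degenerate case $k=2$ (where the form becomes $2v_1^2\geq 0$) already anticipates the semi-positivity of $G$ used later in Theorem~\ref{thm-main-new-1}. Both are complete as written; note only that $f_\lambda$ is defined in a chart and must be cut off to live on $M$, which is harmless since $c_+$ is local.
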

The proof is given in Section \ref{sec-order-proof} below.
Theorem \ref{thm-order} sheds light on the differential
operator $a$ appearing in the coboundary formula \eqref{eq-a}.
In fact, let us choose some
Darboux coordinates $U\subset X$, and take
$f,\,g\in C^\infty(M)$ with compact support in $U$.
In these coordinates, we can write
\begin{equation}\label{c+coord}
c_+(f,g)=\sum_{j,\,k=1}^{2d} a_{jk}\,\partial_j f\,
\partial_k g\,,
\end{equation}
with smooth $a_{jk}=a_{kj}\in C^\infty(U)$
for each $1\leq j,\,k\leq 2d$.
Then one can choose the differential operator
\begin{equation}\label{acoord}
a:=\frac{1}{2}\sum_{j,\,k=1}^{2d}\partial_j \left(a_{jk}\partial_k\right)\,.
\end{equation}
in the coboundary formula \eqref{eq-a}.
Using integration by parts,
we see that \eqref{acoord} is symmetric with respect to the
canonical $L^2$-scalar product on $C^\infty(M)$ associated
to the symplectic volume, and as the differential
operator $a$ is determined up to
its first order part, it is the unique such choice.

\begin{exam}\label{exam-1}{\rm Assume that $(M,\om)$ is
quantizable and equipped with an almost-K\"{a}hler structure
$(M,\om,J,G_J)$. Then the induced Berezin-Toeplitz quantization of
Theorem \ref{BQ} satisfies
\begin{equation}
\label{eq-cpluskahler1}
c_+(f,g) = -\frac{1}{2} (\nabla f,\nabla g)\;,
\end{equation}
where the gradient and the product are defined with respect to $G_J$.
Using that
\begin{equation}\label{Deltacocycle}
\Delta(fg) + 2(\nabla f,\nabla g) = f\Delta g + g\Delta f\;,
\end{equation}
where $\Delta$ is the (positive) Laplace-Beltrami operator associated
with $G_J$, the differential operator in the
coboundary formula \eqref{eq-a}
can then be chosen to be $a=\Delta/4$, and
by basic properties of $\Delta$, it is the unique
$L^2$-symmetric choice with respect to the symplectic volume
form, as it coincides with the Riemannian volume form of $G_J$.
Formula \eqref{eq-cpluskahler1} can be found in
\cite[p.\,257]{Xu} for the K\"{a}hler case and in
\cite{I1,I2} for the almost-K\"{a}hler case. Using the $J$-invariance
of the metric and the relation $J\,\text{sgrad} f= - \nabla f$
between Hamiltonian vector field and gradient of a function
$f\in C^\infty(M)$
for an $\omega$-compatible metric, formula \eqref{eq-cpluskahler1}
translates into
\begin{equation}
\label{eq-cpluskahler}
c_+(f,g) = -\frac{1}{2} G_J(\text{sgrad} f,\text{sgrad}g)\;.
\end{equation}
}
\end{exam}

\begin{exam}\label{exam-2} {\rm   We  now give an example of a
Berezin-Toeplitz quantization whose unsharpness cocycle
$c_+$ is not of the form \eqref{eq-cpluskahler} for some
almost-K\"{a}hler structure on $(M,\om)$. This example serves as a paradigm
for the construction presented in the proof of one of our main results,
Theorem \ref{thm-main-new-1}(III) below. Assume $(M,\om)$ quantizable and equipped
with an almost-K\"{a}hler structure $(M,\om,J,G_J)$, and
consider the induced Berezin-Toeplitz quantization of
Theorem \ref{BQ}. Fix $t>0$, and using the notations of
Example \ref{exam-2}, consider for any
$\hbar\in\Lambda=\{1/k\}_{k\in\N}$ the map
$T_\hbar^{(t)}: C^\infty(M) \to \cL(H_\hbar)$
defined for any $f\in C^\infty(M)$ by
$$T^{(t)}_\hbar(f) := T_\hbar(e^{-t\hbar\Delta} f)\;.$$
Observe that the heat flow preserves positivity, so that
$T^{(t)}_\hbar$ is in fact the quantization map
\eqref{quantmap} induced by a POVM construction.
Then from the classical small time asymptotic expansion
of the heat kernel (see e.g. \cite[Th.\,2.29, (2.8)]{BGV04}),
as $\hbar\to 0$, we have
\begin{equation}
e^{-t\hbar\Delta}f = f -t\hbar\Delta\,f+\bigo(\hbar^2)\,
\|f\|_{C^4}
\,,
\end{equation}
and this implies in particular that all the axioms of
Definition \ref{BTdef} hold. Let us now calculate the associated
unsharpness cocycle, denoted by
$c_+^{(t)}$. For any $\hbar\in\Lambda$ and $A,\,B\in\End(H_\hbar)$,
put $A \bullet B := \frac{1}{2}(AB+BA)$, and recall formula
\eqref{Deltacocycle} for the Laplace-Beltrami operator.
Then as $\hbar\to 0$, we have
\begin{equation}\label{computexam1}
\begin{split}
&T^{(t)}_\hbar(f)\bullet T^{(t)}_\hbar(g)=T_\hbar(f) \bullet T_\hbar(g) -
t \hbar T_\hbar(f\Delta g + g\Delta f)+ \bigo(\hbar^2)\\
&= T_\hbar(fg) - \hbar T_\hbar\left(\frac{1}{2}(\nabla f,\nabla g) +
t\left(\Delta(fg) + 2(\nabla f,\nabla g)\right)\right) + \bigo(\hbar^2)\\
&= T_\hbar(fg) + \hbar
T^{(t)}_\hbar\left(-\left(\frac{1}{2}+2t\right) (\nabla f,\nabla g)\right) + \bigo(\hbar^2)\;,
\end{split}
\end{equation}
so that, recalling that the quasi-multiplicativity property
(P3) determines the unsharpness
cocycle uniquely via norm correspondence (P1), we get
\begin{equation}
\begin{split}
c^{(t)}_+(f,g)&=-\left(\frac{1}{2}+2t\right) (\nabla f,\nabla g)\\
&=-\frac{1}{2}\left(1+4t\right)G_J(\text{sgrad} f,\text{sgrad} g)\;.
\end{split}
\end{equation}
In particular, we see that $c^{(t)}_+$ is of the form
\eqref{eq-cpluskahler} for the Riemannian metric
$G^{(t)}:=(1+4t)\,G_J$ on $M$, whose volume
is strictly bigger than the volume of the almost-K\"{a}hler metric
$G_J$. As the volume of an almost-K\"{a}hler metric is always
equal to the symplectic volume of $(M,\om)$, we see
from \eqref{exam-1} that
$c^{(t)}_+$ is not the unsharpness
cocycle of a Berezin-Toeplitz quantization coming from
Theorem \ref{BQ}.
}
\end{exam}

\section{The least unsharpness principle}
\label{sec-main}

\bigskip

In this section, we state the main theorem on unsharpness
of Berezin-Toeplitz quantizations, which we call the
\emph{least unsharpness principle}, and discuss its physical meaning.

Recall from Theorem \ref{thm-order} that
the unsharpness cocycle $c_+$ of a Berezin-Toeplitz
quantization is a bi-differential operator of order $(1,1)$,
so that
there exists a bilinear symmetric form $G$ on $TM$ such that
\begin{equation}\label{eq-G}
c_+(f,g) =: -\frac{1}{2}G(\text{sgrad}f,\text{sgrad}g)\,,
\end{equation}
where $\text{sgrad}f,\,\text{sgrad}g$ denote
the Hamiltonian vector fields of  $f,\,g\in C^\infty(M,\R)$.
Our main result provides a description
of this bilinear form $G$.

\begin{thm}\label{thm-main-new-1} Let $(M,\omega)$ be a closed symplectic manifold.
\begin{itemize}
\item[\textup{(I)}] For every Berezin-Toeplitz quantization of $M$, the form $G$ is a Riemannian metric on $M$ which can be written
as { the sum
\begin{equation}\label{metric}
G=\om(\cdot,J\cdot)+\rho(\cdot,\cdot)\,,
\end{equation}
where $J\in\End(TM)$ is a compatible almost complex structure on
$(M,\om)$ and $\rho$ is a non-negative symmetric bilinear form
on $TM$.
\item[\textup{(II)}] We have $\textup{Vol}(M,G) \geq \textup{Vol}(M,\omega)$, with
equality if and only if $\rho \equiv 0$.}
\item[\textup{(III)}] If $(M,\omega)$ is quantizable, then every Riemannian metric of the form \eqref{metric} arises from some Berezin-Toeplitz quantization.
    \end{itemize}
    \end{thm}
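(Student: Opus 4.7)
The strategy for \textup{(I)} is to establish the identity $V(f)=\tfrac{1}{2}G(\sgrad f,\sgrad f)$, where $V(f)$ is the leading-order coefficient of $\tr(T_\hbar(f)^2F_{\hbar,x})-\tr(T_\hbar(f)F_{\hbar,x})^2$, and then to extract the decomposition \eqref{metric} via Robertson's uncertainty. Writing $\cB_\hbar=\mathrm{Id}+\hbar b+\bigo(\hbar^2)$, a direct expansion using (P3), (P5) gives
$$\tr(T_\hbar(f)^2F_{\hbar,x})-\tr(T_\hbar(f)F_{\hbar,x})^2=\hbar V(f)(x)+\bigo(\hbar^2),\quad V(f)=b(f^2)-2fb(f)+c_+(f,f).$$
To pin down $b$, I would expand both sides of the trace identity $\tr(T_\hbar(f)T_\hbar(g))=n_\hbar\int_M f\,\cB_\hbar(g)\,d\alpha_\hbar$ (immediate from \eqref{quantmap}) using (P3), (P4) and the asymptotics of $n_\hbar d\alpha_\hbar$, obtaining the weak relation $\int_M f\,b(g)\,d\mu=\int_M c_+(f,g)\,d\mu+\bigo(\hbar)$. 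Since $c_+$ is the Hochschild coboundary of the $L^2(d\mu)$-symmetric operator $a$ of \eqref{acoord} with $a(1)=0$, integration by parts converts the right-hand side into $-2\int_M f\,a(g)\,d\mu$, so $b+2a$ is a first-order operator with trivial coboundary; hence $V(f)=-c_+(f,f)$ pointwise. Robertson's inequality applied to $T_\hbar(f),T_\hbar(g)$ in $F_{\hbar,x}$, combined with (P2), then gives $V(f)V(g)\geq\tfrac{1}{4}\{f,g\}^2$ at leading order in $\hbar$, which translates through $\sgrad$ into the taming bound $G(u,u)G(v,v)\geq\omega(u,v)^2$ for all $u,v\in T_xM$.

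In particular $G$ is positive definite, hence a Riemannian metric. The decomposition \eqref{metric} then follows by a standard polar decomposition: the $G$-antisymmetric endomorphism $A\in\End(TM)$ defined by $\omega(u,v)=G(Au,v)$ satisfies $-A^2\leq\mathrm{Id}$ in the $G$-sense by the taming bound (take $v=Au$), and the factorization $A=J|A|$ with $|A|=\sqrt{-A^2}$ produces an $\omega$-compatible almost complex structure $J=A|A|^{-1}$ together with $\rho:=G-\omega(\cdot,J\cdot)=G((\mathrm{Id}-|A|)\cdot,\cdot)\geq 0$. Part \textup{(II)} is then immediate pointwise linear algebra: compatible metrics satisfy $\mathrm{Vol}(M,\omega(\cdot,J\cdot))=\mathrm{Vol}(M,\omega)$, and $\det G/\det\omega(\cdot,J\cdot)=\det(\mathrm{Id}+\omega(\cdot,J\cdot)^{-1}\rho)\geq 1$ with equality iff $\rho\equiv 0$; integration over $M$ yields the claim.

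For \textup{(III)}, given the prescribed data $(J,\rho)$, I would combine the almost-K\"ahler Berezin-Toeplitz quantization associated to $J$ via Theorem \ref{BQ} (whose unsharpness metric is $\omega(\cdot,J\cdot)$ by Example \ref{exam-1}) with a perturbation of the form of Example \ref{exam-2}. Specifically, choose a second-order Markov generator $L$ whose principal symbol, transported through $\omega$, equals $\tfrac{1}{4}\rho$---for strictly positive $\rho$ one may take the Laplace-Beltrami operator of an auxiliary Riemannian metric, while for degenerate $\rho\geq 0$ a sub-Laplacian suffices---and set $T^L_\hbar(f):=T_\hbar(e^{-\hbar L}f)$. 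The maximum principle ensures $e^{-\hbar L}$ preserves positivity, so $T^L_\hbar$ is the POVM-based quantization associated to the convolved POVM $\Ga^L_\hbar$; axioms (P1)-(P5) transfer from those of $T_\hbar$ through the small-time heat-kernel asymptotics as in Example \ref{exam-2}, and the bracket computation parallel to \eqref{computexam1} yields the prescribed unsharpness $\omega(\cdot,J\cdot)+\rho$.

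The main technical hurdle is concentrated in the identity $b=-2a$ modulo first-order terms, which underlies everything else in \textup{(I)}. The integration-by-parts argument has to handle the uniform $\bigo(\hbar)$-remainders in (P3) and (P4) in the $C^N$-norms of the test functions, invoke the $L^2$-symmetry of $a$ with respect to the symplectic volume $d\mu$, and extract a pointwise identity from an a priori integral one modulo first-order operators with trivial coboundary. Once this key relation is established, the passage to the taming inequality and the polar-decomposition step is pure linear algebra, and parts \textup{(II)}, \textup{(III)} follow without further serious difficulty.
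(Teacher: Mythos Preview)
Your argument for (I) contains a genuine gap: you assume the Berezin transform admits a pointwise expansion $\cB_\hbar=\mathrm{Id}+\hbar b+\bigo(\hbar^2)$ with $b$ a differential operator, but axiom (P5) only asserts $\cB_\hbar(f)=f+\bigo(\hbar)$. The paper explicitly flags the stronger expansion in Remark~\ref{rmkvar} as an \emph{additional} hypothesis not implied by (P1)--(P5). Your trace-identity computation does establish the weak relation $\int_M f\,\tfrac{\cB_\hbar(g)-g}{\hbar}\,d\mu \to -2\int_M f\,a(g)\,d\mu$, but weak $L^2$-convergence does not upgrade to the pointwise identity $V(f)=-c_+(f,f)$ that you need to feed into Robertson's inequality. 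Without it, the quantum variance $\mathbb{V}ar(T_\hbar(f),F_{\hbar,x})$ cannot be identified with $-\hbar\,c_+(f,f)$ at the required precision, and the taming bound does not follow.

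The paper sidesteps this by replacing Robertson's uncertainty for variances with the \emph{noise inequality} (Lemma~\ref{lem-unsharp}), proved via Naimark dilation: for any POVM one has $\tr(\Delta_\hbar(u)\theta)\,\tr(\Delta_\hbar(v)\theta)\geq\tfrac{1}{4}|\tr([T_\hbar(u),T_\hbar(v)]\theta)|^2$, where $\Delta_\hbar(f)=T_\hbar(f^2)-T_\hbar(f)^2$ is the noise operator rather than the variance. The key point is that $\tr(\Delta_\hbar(f)F_{\hbar,x})=-\hbar\,\cB_\hbar(c_+(f,f))+\bigo(\hbar^2)=-\hbar\,c_+(f,f)+\bigo(\hbar^2)$ follows from (P3) and the zeroth-order (P5) alone, yielding $G(\xi,\xi)G(\eta,\eta)\geq\omega(\xi,\eta)^2$ directly; after that your polar decomposition and part (II) coincide with the paper's. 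For (III), the paper warns that the heat-semigroup route ``becomes elusive when the form $\rho$ is degenerate'' and instead builds explicit Gaussian Markov kernels with a regularization $A_t=t(-\pi J\rho_gJ+t\,\mathrm{Id})$; your sub-Laplacian suggestion would still need a justification of positivity preservation and uniform small-time asymptotics in the degenerate regime.
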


 \medskip
 \noindent The proof is given in Section \ref{sec-proofs}. Let us mention that
the proof of item (III) of the theorem is modeled on Example \ref{exam-2}  above and is constructive. We produce the desired Berezin-Toeplitz quantization with the unsharpness metric given by \eqref{metric} as the composition of the almost-K\"{a}hler quantization associated to $(\omega,J)$ and an explicit, albeit non-canonical, Markov operator depending on all the data including $\rho$.

    \medskip
    \noindent
    \begin{rem}\label{rem-rho}{\rm For a given metric $G$ on $M$,
    the decomposition \eqref{metric}
    is in general not unique. However, as the proof of Theorem \ref{thm-main-new-1}\,(I)
    will show,  { there exists a unique $\omega$-compatible almost complex structure $J$
     which additionally is $G$-orthogonal, i.e., $G(J\xi,J\eta)=G(\xi,\eta)$ for all $\xi,\eta\in TM$. Furthermore, for such
    a $G$, the symmetric bilinear form $\rho(\xi,\eta) = G(\xi,\eta) - \omega(\xi,J\eta)$ is
     non-negative, thus providing decomposition \eqref{metric}.
%
    }}
    \end{rem}

Before going to the proof of Theorem \ref{thm-main-new-1} in
the next Section, let us first discuss the physical meaning of
the unsharpness cocyle $c^+$ associated with a Berezin-Toeplitz
operator, which shows from general principles
that it is at least non-negative.
{With every quantum state $\theta \in \cS(H_\hbar)$ one associates a classical state (called {\it the Husimi measure}), which is the  probability measure $\mu_{\theta}$ on $M$ such that
\begin{equation}\label{Husimi}
\int_M f\,d\mu_\theta = \tr(T_\hbar(f)\,\theta)\;, f \in C^\infty(M)\,.
\end{equation}
}
This equality can be interpreted as follows: the expectation of any classical observable
$f$ in the classical state $\mu_\theta$ coincides with the expectation of the corresponding quantum observable $T_\hbar(f)$ in the state $\theta$. What happens with variances? It turns out
that the quantum variance is in general bigger than the classical one. More precisely, we
have that
$$\mathbb{V}ar (f,\mu_\theta) = \int_M f^2 d\mu_\theta - \left(\int fd\mu_\theta\right)^2\;,$$
$$\mathbb{V}ar (T_\hbar(f),\theta) = \tr (T_\hbar(f)^2\theta) - \left(\tr(T_\hbar(f)\theta)\right)^2\;,$$
and hence
\begin{equation}\label{noiseincrement}
\mathbb{V}ar (f,\mu_\theta) = \mathbb{V}ar (T_\hbar(f),\theta) + \tr(\Delta_\hbar(f)\theta)\;,
\end{equation}
where
\begin{equation}\label{noisedef}
\Delta_\hbar(f) := T_\hbar(f^2)-T_\hbar(f)^2\;.
\end{equation}
The operator $\Delta_\hbar(f)$ is called the {\it noise operator} (see e.g. \cite{Busch}), whose main property is
that it is a \emph{non-negative operator}.
It describes the increase of variances, which
can be interpreted as the unsharpness of the quantization.
Then by the quasi-multiplicativity property (P3), we have
\begin{equation}\label{eq-noise}
\Delta_\hbar(f) = - \hbar\,T_\hbar\left(c_+(f,f)\right)+ \bigo(\hbar^2)\,.
\end{equation}
Look at the expectation of $\Delta_\hbar(f)$ at the coherent state $F_{\hbar,x}$ of Remark \ref{cohstate} associated to
$\Gamma_\hbar$,
\begin{equation}\label{noisefla}
\begin{split}
\tr&\left(\Delta_\hbar(f)F_{\hbar,x}\right)= -\hbar\,\tr \left(T_\hbar\,(c_+(f,f))F_{\hbar,x}\right)+\bigo(\hbar^2)\\
&=-\hbar\,\cB_\hbar\left(c_+(f,f)\right)(x)+\bigo(\hbar^2)\\
&=-\hbar\,c_+(f,f)(x)+\bigo(\hbar^2)\;.
\end{split}
\end{equation}
This explains the name of unsharpness cocycle for $c^+$.
Since the noise operator is non-negative, we get the following
fundamental property of the unsharpness cocycle,
\begin{equation}\label{eq-positivity}
- c_+(f,f)(x) \geq 0\quad\text{for all}\quad x \in M\;.
\end{equation}
This shows that the symmetric
bilinear form $G$ defined in equation
\eqref{eq-G} is at least semi-positive.
This property is the first step towards the proof of
Theorems \ref{thm-main-new-1}, showing that $G$ is in fact
a Riemannian metric, called the \emph{unsharpness metric} of
the quantization. Note that this property is also at the basis
of the proof of Theorem \ref{thm-order}.

\medskip

Define the {\it total unsharpness} of a Berezin-Toeplitz quantization
as the volume of the phase space $M$ with respect to the unsharpness metric. With this language, statement (II) of Theorem \ref{thm-main-new-1} can be interpreted as the {\it least unsharpness principle}: the minimal possible total unsharpness equals the symplectic volume, and the least unsharpness metrics
are induced by compatible almost-complex structures on $M$.

\begin{rem}\label{rmkvar}
{\rm Let us assume that the Berezin transform
admits an asymptotic expansion
up to the first order as $\hbar\to 0$ of the following form
for all $f \in C^\infty(M)$,
\begin{equation}\label{BTexp}
\cB_\hbar(f) = f + \hbar Df + \bigo(\hbar^2)\,,
\end{equation}
where $D$ is a differential operator, stenghtening the
reversibility property (P5).
Then by Definition \ref{Bdef} of the Berezin transform,
formula \eqref{quantmap} for the quantization map
and the expansion \eqref{eq-measure} for $\alpha_\hbar$,
for all $f,\,g \in C^\infty(M)$, we have
\begin{equation}
\begin{split}
\frac{1}{n_\hbar} \tr (T_\hbar(f)T_\hbar(g))&= \int_M \cB_\hbar(f)\,g
\,d\alpha_\hbar\\
&=
\int_M fg\,d\alpha_\hbar + \hbar \int_M \left(Df\right) g\,d\mu + \bigo(\hbar^2)\;.
\end{split}
\end{equation}
On the other hand, by the quasi-mutliplicativity property (P3),
using formula \eqref{eq-cminus} and basic properties
of the Poisson bracket, we get
\begin{equation}
\begin{split}
&\frac{1}{n_\hbar} \tr (T_\hbar(f)T_\hbar(g))= \int_M (fg+\hbar c(f,g)+ \bigo(\hbar^2))\,d\alpha_\hbar\\
& = \int_M fg\,d\alpha_\hbar + \hbar \int_M c_+(f,g)\, d\mu+\bigo(\hbar^2)\;.
\end{split}
\end{equation}
Then taking
$f,\,g\in C^\infty(M)$ with compact support in Darboux
coordinates, using formulas \eqref{c+coord} and
\eqref{acoord} and intergration by parts,
we then get
\begin{equation}
\label{eq-bda}
D=-2a\,,
\end{equation}
where $a:C^\infty(M)\to C^\infty(M)$ is the unique
$L^2$-symmetric differential
operator on $C^\infty(M)$ with respect to symplectic volume
satisfying the coboundary formula \eqref{eq-a}.
In light of Example \ref{exam-1}, this fact generalizes the
Karabegov-Schlichenmaier expansion
\cite{KS01,IKPS} for the Berezin-Toeplitz quantizations of
Theorem \ref{BQ}.

Another consequence of the improvement \eqref{BTexp} of the
reversibility property (P5) is that "unsharpness equals
variance on coherent states". To see this, recall
definition
\eqref{Husimi} of the Husimi measure on the coherent
state $F_{\hbar,x}\in\cS(H_\hbar)$ of Remark
\ref{cohstate}. Then the discussion above implies
\begin{equation}
\begin{split}
\mathbb{V}ar (f,\mu_{F_{x,\hbar}})
&=\cB_\hbar(f^2)-\cB_\hbar(f)^2\\
&=-2\hbar\,c^+(f,f)+\bigo(\hbar^2)\,.
\end{split}
\end{equation}
Thus by formula \eqref{noiseincrement} and \eqref{noisefla},
we get
\begin{equation}\label{var=c+}
\mathbb{V}ar (T_\hbar(f),F_{\hbar,x}) = - \hbar\,c_+(f,f)(x) + \bigo(\hbar^2)\;,
\end{equation}
so that the variance of a quantized observable at
coherent states is equal to its unsharpness.

In their geometric formulation of quantum mechanics,
Ashtekar and Shilling \cite[\S\,3.2.3,\,(26)]{AS}
consider the projectivization $\mathbb{P}(H_\hbar)$
as a ``quantum phase space": a line $\xi\in \mathbb{P}(H_\hbar)$ is identified
with the pure state given by the rank-one projector to $\xi$.
In this setting, they
give a physical interpretation
of the Fubini-Study metric $g^{FS}$ on
$\mathbb{P}(H_\hbar)$ in terms of the variance
of a quantum observable $A\in\cL(H_\hbar)$ at a pure state $\xi \in\mathbb{P}(H_\hbar)$. Specifically,
write $v_A$ for the vector field on $\mathbb{P}(H_\hbar)$
induced by the infinitesimal action of
$iA\in\mathfrak{u}(H_\hbar)$, seen
as an element of the Lie algebra of the group of unitary
operators $U(H_\hbar)$ acting on $\mathbb{P}(H_\hbar)$.
Then
the variance of $A$ at $\xi$ is given by
\begin{equation}
\mathbb{V}ar (A,\xi) = \frac{1}{2}\,
g^{FS}_{\xi}(v_A,v_A)\;.
\end{equation}
Back to the quantization, assume further that the coherent states
$F_{\hbar,x}\in\cS(H_\hbar)$ are pure
for all $x\in M$. Consider the induced map
\begin{equation}\label{Kodemb}
F_{\hbar}:M\longrightarrow\mathbb{P}(H_\hbar)\;.
\end{equation}
Then equation \eqref{var=c+} says that the Fubini-Study length of the
vector field $v_A$ induced by the quantum observable
$A:= T_\hbar(f)\in\cL(H_\hbar)$ at the coherent state
$F_{\hbar,x}\in\mathbb{P}(H_\hbar)$
approaches, as $\hbar\to 0$, the length of the Hamiltonian vector field $\sgrad f$ at a point $x\in M$
with respect to our unsharpness metric.
In the case of the K\"{a}hler quantizations of Theorem \ref{BQ},
the map \eqref{Kodemb} coincides with the
\emph{Kodaira map}. Then the picture described
above is closely related to
a theorem of Tian \cite{Ti} showing that the pullback of the
Fubini-Study metric by the Kodaira map approaches
the Kähler metric as $k\to+\infty$.
}
\end{rem}

\section{Proof of the main Theorem} \label{sec-proofs}

In this Section, we prove Theorem \ref{thm-main-new-1}.
To this end, first recall from the previous section that
the non-negativity of the noise operator \eqref{noisedef}
leads to the semi-positivity property \eqref{eq-positivity}
for the unsharpness. To establish the stronger statement
(i) of Theorem \ref{thm-main-new-1}, we will use a stronger
property of noise operators coming from the general
theory of POVM-based quantum measurements, called the
\emph{noise inequality}.
It appears in several sources
\cite{Ozawa}, \cite[Theorem 7.5]{Hyashi}, \cite[Theorem 9.4.16]{PR}, albeit none of them contains the version we need. For the sake of completeness, we present a proof which closely follows \cite{PR} and is based on an idea from \cite{Janssens}.

Consider a set $M$ equipped with a $\sigma$-algebra $\cC$
together with a finite-dimensional Hilbert space $H$,
and let $F$ be an $\cL(H)$-valued POVM in the sense of
Definition \ref{POVMdef}.
For a bounded function $u\in L^\infty(M)$,
we define the \emph{noise operator}
$$\Delta_F(u) := \int_M u^2\,dF - \left(\int_M u\,dF\right)^2\;.$$
For a pair of bounded functions $u,\,v\in L^\infty(M)$,
set $$U := \int_M u\,dF,\quad V:=\int_M v\,dF\,.$$

\medskip
\noindent
\begin{lemma} \label{lem-unsharp} For every state $\theta \in \cS(H)$, we have the inequality
 \begin{equation}\label{eq-unsharp}
\tr  \left(\Delta_F(u)\theta\right)\,\tr \left(\Delta_F(v)\theta\right)\geq
\frac{1}{4}\,\left|\tr\left( [U,V]\theta\right) \right|^2\;.
 \end{equation}
 \end{lemma}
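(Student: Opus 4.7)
The plan is to apply Naimark's dilation theorem to reduce the inequality to a single Cauchy--Schwarz estimate on a Hilbert--Schmidt inner product. The key conceptual point is that the non-commutativity of $U$ and $V$ comes entirely from $F$ failing to be projection valued: after dilating to a PVM on a larger space, the corresponding operators commute, and all the interesting content is concentrated in the compression back down to $H$.

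Concretely, I would first invoke Naimark's theorem to produce a Hilbert space $\tilde H$, an isometry $W:H \to \tilde H$, and an $\cL(\tilde H)$-valued PVM $\tilde F$ on $(M, \cC)$ with $F(X)=W^*\tilde F(X)W$ for every $X \in \cC$. Setting $\tilde U := \int_M u\,d\tilde F$ and $\tilde V := \int_M v\,d\tilde F$ gives bounded self-adjoint operators on $\tilde H$ which \emph{commute}, since both belong to the abelian von Neumann algebra generated by $\tilde F$. By construction $U = W^*\tilde U W$, $V = W^*\tilde V W$, and $\int_M u^2\,dF = W^*\tilde U^2 W$.

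Introducing the complementary projection $Q := I_{\tilde H}-WW^*$ onto $(WH)^\perp$, a short algebraic manipulation using only $W^*W = I_H$ and $[\tilde U, \tilde V]=0$ yields the two key identities
$$\Delta_F(u) = W^*\tilde U Q \tilde U W = (Q\tilde U W)^*(Q\tilde U W), \qquad [U,V] = W^*(\tilde V Q \tilde U - \tilde U Q \tilde V)W,$$
the first exhibiting $\Delta_F(u)$ as a manifest square, the second channelling the commutator entirely through $Q$. Defining the Hilbert--Schmidt operators $A := Q\tilde U W\sqrt{\theta}$ and $B := Q\tilde V W\sqrt{\theta}$ from $H$ to $\tilde H$, cyclicity of the trace together with $Q^2 = Q = Q^*$ gives $\|A\|_{HS}^2 = \tr(\Delta_F(u)\theta)$, $\|B\|_{HS}^2 = \tr(\Delta_F(v)\theta)$, and $\tr([U,V]\theta) = \langle B, A\rangle_{HS} - \langle A, B\rangle_{HS} = -2i\,\IM\langle A, B\rangle_{HS}$. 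A single application of the Cauchy--Schwarz inequality for the Hilbert--Schmidt inner product then produces
$$\tfrac14\,|\tr([U,V]\theta)|^2 \;=\; |\IM\langle A, B\rangle_{HS}|^2 \;\leq\; \|A\|_{HS}^2\,\|B\|_{HS}^2,$$
which is exactly \eqref{eq-unsharp}.

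The main obstacle is organisational rather than analytic: one must carefully track how the compressions by $W$ and $W^*$ interact with the idempotent $Q$ and with the commutativity of the dilated operators, in order to recognise $\Delta_F(u)$ as a bona fide square $|Q\tilde U W|^2$ and $[U,V]$ as a pure imaginary multiple of a Hilbert--Schmidt pairing of $A$ and $B$. Once those two structural observations are in place, the inequality is a one-line consequence of Cauchy--Schwarz.
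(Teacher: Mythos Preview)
Your proposal is correct and follows essentially the same route as the paper: Naimark dilation, the observation that $\Delta_F(u)=W^*\tilde U\,Q\,\tilde U W$ and $[U,V]=W^*(\tilde V Q\tilde U-\tilde U Q\tilde V)W$ (the paper packages these as $q(S,S)$ and $q(T,S)-q(S,T)$ with $q(S,T)=\Pi S(1-\Pi^*\Pi)T\Pi^*$), and a single Cauchy--Schwarz. The only cosmetic difference is that you absorb the state into the vectors via $\sqrt{\theta}$ and use the Hilbert--Schmidt inner product, whereas the paper keeps $\theta$ in a sesquilinear form $(A,B)_\theta=\tr(A\,\Pi^*\theta\,\Pi\,B^*)$ on $\End(H')$; these are the same Cauchy--Schwarz inequality.
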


 \medskip
 \noindent
 \begin{proof}
By the \emph{Naimark dilation theorem}
(see e.g. \cite[Theorem 9.4.6]{PR}), there exists
an isometric embedding of
$H$ into a (possibly infinite-dimensional) Hilbert space $H'$ and
an $\cL(H')$-valued {\it projector valued} measure $P$ such that
for every subset $X \subset \cC$, we have
\begin{equation}\label{Naimdil}
F(X)=\Pi P(X)\Pi^*\in\cL(H)\,,
\end{equation}
where $\Pi : H' \to H$ is the orthogonal projector, so
that $\Pi^*:H\to H'$ is the inclusion.
Here $\cL(H')$ stands for the space of
all bounded Hermitian operators on $H'$, and
an $\cL(H')$-projector valued measure is by definition
a map $P:\cC\to\cL(H')$ satisfying the axioms of Definition
\ref{POVMdef}, and such that the operators $P(X)$, $X \in \cC$, are pair-wise commuting orthogonal projectors.

Define a pairing $$q: \cL(H') \times \cL(H') \to \End(H)\;,$$
$$q(S,T) := \Pi S(1-\Pi^*\Pi)T\Pi^*\;.$$
We claim that for every state $\theta \in \cS(H)$ and all $S,\,T \in \cL(H')$,
we have
\begin{equation}\label{eq-ST-1}
\tr(q(S,S)\theta)\, \tr( q(T,T)\theta) \geq \left|\tr(q(S,T)\theta)\right|^2\;.
\end{equation}
To see this, note that $1-\Pi^*\Pi:H'\to H'$ is the orthogonal projector on the orthogonal complement of $H$, so that
$$\tr(q(S,T)\theta)=
\tr((1-\Pi^*\Pi)T\,\Pi^*\theta\Pi\,S(1-\Pi^*\Pi))\,.$$
Then \eqref{eq-ST-1} follows from
Cauchy-Schwarz inequality
applied to the semi-norm on the space $\End(H')$ of
bounded operators of $H'$ defined by
$(A,B)_\theta:=\tr(A\,\Pi^*\theta\Pi\,B^*)$,
for all $A,\,B\in\End(H')$.

Set now $S= \int u\,dP$ and $T = \int v\,dP$.
Since $S$ and $T$ commute, we have
$[\Pi S\Pi^*,\Pi T\Pi^*] = q(T,S)-q(S,T)$. On the other hand,
by definition \eqref{Naimdil},
we have $\Pi S\Pi^*= U$ and $\Pi T\Pi^*= V$, while using an
approximation
by simple functions, one computes
that $q(S,S) = \Delta_F(u)$ and $q(T,T) = \Delta_F(v)$.
The statement of the lemma then directly follows from \eqref{eq-ST-1}.
\end{proof}

\medskip
\noindent
{\sc Proof of (I):}  Applying Lemma \ref{lem-unsharp} to the
Berezin-Toeplitz POVM $\Gamma_\hbar$ of Definition
\ref{BTdef}, for every state $\theta \in \cS(H_\hbar)$ and observables $u,\,v \in C^\infty(M)$, we get
\begin{equation}\label{eq-surprising}
\tr\left( \Delta_\hbar(u)\theta \right)\,
\tr\left(\Delta_\hbar(v)\theta \right)\\ \geq \frac{1}{4}\,
\left|\tr ([T_\hbar(u),T_\hbar(v)]\theta)\right|^2\;.
\end{equation}
Now by Definition \ref{Bdef} of the Berezin transform
and the expression \eqref{noisedef} for the noise operator,
we know that for all $u\in C^\infty(M)$,
\begin{equation}\label{eq-surp-vsp}
\tr\left ( \Delta_\hbar(u)F_{\hbar,x}\right) = -\hbar\,\cB(c_+(u,u))(x) + \bigo(\hbar^2)\;,
\end{equation}
and for all $u,\,v \in C^\infty(M)$,
\begin{equation}
\begin{split}
-i\,\tr ([T_\hbar(u),T_\hbar(v)]F_{\hbar,x})&= \hbar\,\tr(T_\hbar(\{u,v\})F_{\hbar,x}) +\bigo(\hbar^2)\\
&= \hbar\,B_\hbar(\{u,v\})(x)+\bigo(\hbar^2)\;.
\end{split}
\end{equation}
Thus applying the noise inequality \eqref{eq-surprising}
with $\theta$ being the coherent state $F_{\hbar,x}$,
we get that
\begin{equation} \label{eq-Btr-bound}
\cB_\hbar(c_+(u,u))(x)~\cB_\hbar(c_+(v,v))(x) \geq \frac{1}{4}|\cB_\hbar(\{u,v\})(x)|^2\;,
\end{equation}
so that
the reversibility property (P5) yields
\begin{equation}\label{eqBtrpre}
c_+(u,u)~c_+(v,v) \geq \frac{1}{4}|\{u,v\}|^2\;.
\end{equation}
Thus for all
$\xi,\,\eta\in T_xM$, picking functions
$u,\,v \in C^\infty(M)$ with
$\text{sgrad}\,u(x)=\xi$, $\text{sgrad}\,v(x)=\eta$
and by definition
\eqref{eq-G} of the bilinear form $G$, we get
\begin{equation} \label{eq-Btr-bound-1}
G(\xi,\xi)~G(\eta,\eta) \geq |\omega(\xi,\eta)|^2 \;.
\end{equation}
Now thanks to the non-negativity of the noise
operator, which follows from Lemma \ref{lem-unsharp},
we already know that
$G$ is a semi-positive symmetric bilinear form
by formula \eqref{eq-positivity}.
Inequality \eqref{eq-Btr-bound-1} then shows
that $G$ is in fact positive,
so that it defines a Riemannian metric
on $M$.

Let $K\in\End(TM)$ the
$G$-antisymmetric operator defined by
\begin{equation}
G(\cdot,\cdot)=\om(\cdot,K\cdot)\,.
\end{equation}
Then there exists an orthonormal basis
$\{e_j,f_j\}_{1\leq j\leq\dim M}$ of $TM$ such that
$Ke_j=\alpha_j f_j$ and $Kf_j=-\alpha_je_j$, for $\alpha_j\geq 0$
for all $1\leq j\leq\dim M$. Define an
almost complex structure $J\in\End(TM)$ by the formula
\begin{equation}
Je_j=f_j~~\text{and}~~Jf_j=-e_j\,.
\end{equation}
By definition, this almost complex structure is
compatible with $\om$, and $G$ is $J$-invariant.
Set
\begin{equation}\label{g=G-om}
\rho(\cdot,\cdot):=G(\cdot,\cdot)-\om(\cdot,J\cdot)\,.
\end{equation}
We then need to show that for any $\xi\in TM$, we have
\begin{equation}\label{pos}
\rho(\xi,\xi)\geq 0\,.
\end{equation}
But using \eqref{eq-Btr-bound-1}, we know that
\begin{equation}
G(\xi,\xi)=G(\xi,\xi)^{1/2}\,G(J\xi,J\xi)^{1/2}\geq\om(\xi,J\xi)\,,
\end{equation}
which readily implies \eqref{pos} by definition \eqref{g=G-om}
of $\rho$. \qed

\medskip
\noindent
{\sc Proof of (II):} Recall that
the volume of an $\omega$-compatible metric
is always equal to the symplectic volume $\Vol(M,\om)$.
Then the statement (II) follows from the general form of an
unsharpness metric $G$ given by formula  \eqref{metric}.
\qed

\medskip
\noindent
{\sc Proof of (III):} The construction below is a modification of the one in Example \ref{exam-2}. Instead of dealing with the heat semigroup, which  becomes elusive when
the form $\rho$ is degenerate, we construct an explicit family of Markov kernels
such that the desired quantization is the composition of the almost-K\"{a}hler quantization
associated with $J$ from formula \eqref{metric} with the corresponding Markov operator. \footnote{In the language of quantum measurement theory, the POVM of the quantization
constructed below is a smearing of the Berezin-Toeplitz POVM of
Theorem \ref{BQ} by the explicitly constructed Markov operator.}  Let us pass to precise arguments.

 All the estimates in the proof are meant uniformly in $x_0\in M$.
Let $J\in\End(TM)$ be a compatible almost complex structure on
$(M,\om)$ and let $\rho$ be a non-negative symmetric
bilinear form on $TM$.
Consider the Riemannian metric $g$ over $M$ defined by the formula
\begin{equation}
g(\cdot,\cdot)=\om(\cdot,J\cdot)\,.
\end{equation}
For any $t>0$, we define a smooth endomorphism of the tangent
bundle $TM$ by the formula
\begin{equation}\label{A}
A_{t}:=t\left(-\pi J\rho_g J
+t\id\right)\in\End(TM)\,,
\end{equation}
where $\rho_g\in\End(TM)$ is the non-negative symmetric
endomorphism defined by
\begin{equation}\label{rhog}
g(\rho_g\cdot,\cdot)=\rho\,.
\end{equation}
Then $A_{t}$ is positive symmetric with respect to $g$, for all $t>0$.

Let $\epsilon>0$ be smaller than the injectivity radius of
$(X,g)$. For any $x_0\in X$, consider an
isometric identification
$(T_{x_0}X,g)\simeq(\R^{2d},\langle\cdot,\cdot\rangle)$, where $\langle\cdot,\cdot\rangle$
is the standard Euclidean product of $\R^{2d}$,
and let $Z=(Z_1,\cdots Z_{2d})\in\R^{2d}$ be
the induced normal coordinates on the geodesic ball
$B(x_0,\epsilon)\subset X$
of radius $\epsilon$ centered at $x_0$. We write $dZ$ for the Lebesgue
measure on $\R^{2d}$.
Let $\varphi:[0,+\infty)\rightarrow[0,1]$ be a smooth function
identically equal to $1$ over $[0,\epsilon/2)$ and to $0$ over
$[\epsilon,+\infty)$. We define an operator $K_t^\rho$ acting on
$f\in C^\infty(X,\R)$ by the following formula in normal
coordinates around $x_0\in X$,
\begin{equation}\label{Krhodef}
K_t^\rho f\,(x_0):=\frac{1}{\alpha_{t}(x_0)}\int_{B(x_0,\epsilon)}
\varphi(|Z|)f(Z)\,e^{-\pi\left\langle A_{t}^{-1}Z,Z\right\rangle}
\,dZ\,,
\end{equation}
where $\alpha_{t}(x_0):=\Int_{B(x_0,\epsilon)}\varphi(|Z|)\,
e^{-\pi\left\langle A_{t}^{-1}Z,Z\right\rangle}
\,dZ$ is chosen so that $K_t 1\equiv 1$ for all $t>0$.
Note that $f\geq 0$ implies $K_tf\geq 0$ for all $t>0$.

Fix $x_0\in X$, and consider the isometric identification
$(T_{x_0}X,g)\simeq(\R^{2d},\langle\cdot,\cdot\rangle)$
in which $A_t$ is diagonal, so that
using definition \eqref{A}, we can write
\begin{equation}\label{Adiag}
A_{t,x_0}=\textup{diag}\big(t(\lambda_1+t),\cdots,t(\lambda_{2d}+t)
\big)\,,
\end{equation}
where $\{\lambda_j\geq 0\}_{1\leq j\leq 2d}$ are the eigenvalues
of $-\pi J\rho_gJ$ over $T_{x_0}X$.
Using the multi-index notation
$\alpha=(\alpha_1,\cdots,\alpha_{2d})\in\N^{2d}$,
we will use the following
Taylor expansion of $f$ up to order $4$ as $|Z|\rightarrow 0$,
\begin{equation}\label{Taylorfk}
\begin{split}
f&(Z)  =\sum_{0\leq|\alpha|\leq 3}
\frac{\partial^{|\alpha|} f}{\partial Z^\alpha}(x_0)\frac{Z^\alpha}{\alpha!}
+O(|Z|^4)\|f\|_{C^{4}}\,.\\
\end{split}
\end{equation}
On the other hand, using the change of variables $Z_j\mapsto
Z_j/t^{1/2}(\lambda_j+t)^{1/2}$ for each $1\leq j\leq 2d$
and the exponential decrease of the
Gaussian function, we get
a constant $\delta>0$ for any $\alpha\in\N^{2d}$ such that
the following estimate holds as $t\rightarrow 0$,
\begin{equation}\label{exptrick}
\begin{split}
&\int_{B(x_0,\epsilon)}
\varphi(|Z|)Z^{\alpha}\,e^{-\pi\left\langle A_{t}^{-1}Z,Z\right\rangle}
\,dZ=\int_{\R^{2d}}Z^{\alpha}\,
e^{-\pi\sum_{j=1}^{2d}\left(t^{-1}(\lambda_j+t)^{-1}Z_j^2\right)}
dZ\\
&\quad\quad\quad\quad-\int_{\R^{2d}}(1-\varphi(|Z|))Z^{\alpha}\,
e^{-\pi\sum_{j=1}^{2d}\left(t^{-1}(\lambda_j+t)^{-1}Z_j^2\right)}
\,dZ\\
&=\prod_{j=1}^{2d}t^{1/2}(\lambda_j+t)^{1/2}
\left(t(\lambda_j+t)\right)^{\alpha_j/2}
\int_{\R^{2d}}Z^{\alpha}\,
e^{-\pi|Z|^2}dZ+O(e^{-\delta/t})\,.
\end{split}
\end{equation}
Note that we can then explicitly evaluate the integral in the
last line of \eqref{exptrick} using basic properties of the Gaussian
function, and it vanishes as soon as there is an odd monomial
inside $Z^{\alpha}$.
Then considering the Taylor expansion \eqref{Taylorfk} inside the
right hand side of equation \eqref{Krhodef} and using the estimate
\eqref{exptrick}, we get as $t\rightarrow 0$,
\begin{multline}\label{comput1}
K_t^\rho f\,(x_0)=f(x_0)\\
+\frac{\prod_{k=1}^{2d}t^{1/2}(\lambda_k+t)^{1/2}}{\alpha_{t}(x_0)}\,
\left(\sum_{j=1}^{2d}\frac{t(\lambda_j+t)}{4\pi}
\frac{\partial^2 f}{\partial Z_j^2}(x_0)
+O(t^2)\,\|f\|_{C^{4}}\right)\,.
\end{multline}
On the other hand, it follows from the definition of $\alpha_t$
and the estimate \eqref{exptrick} that as $t\rightarrow 0$, we have
\begin{equation}
\alpha_{t}(x_0)=\prod_{j=1}^{2d}t^{1/2}(\lambda_j+t)^{1/2}(1+
O(e^{-\delta/t}))\,.
\end{equation}
Then we get from equation \eqref{comput1} that as $t\rightarrow 0$,
\begin{equation}\label{comput2}
\begin{split}
K_t^\rho f\,(x_0)&=f(x_0)+t
\sum_{j=1}^{2d}\frac{\lambda_j+t}{4\pi}\,
\frac{\partial^2 f}{\partial Z_j^2}(x_0)
+O(t^2)\,\|f\|_{C^{4}}\\
&=f(x_0)+t
\sum_{j=1}^{2d}\frac{\lambda_j}{4\pi}\,
\frac{\partial^2 f}{\partial Z_j^2}(x_0)
+O(t^2)\,\|f\|_{C^{4}}\,.
\end{split}
\end{equation}
Then writing
$T_\hbar$ for the Berezin-Toeplitz quantization of $(M,\om,J)$,
the quantization $T_\hbar^\rho$ defined for all $f\in C^\infty(X,\R)$ by
\begin{equation}
T_\hbar^\rho(f):=T_\hbar\left(K_\hbar^\rho
f\right)\,,
\end{equation}
has unsharpness metric $G$ given by formula \eqref{metric}: in fact,
for any $u,\,v\in C^\infty(X,\R)$, writing
$\nabla_g u,\,\nabla_g v$ for their gradient with respect to $g$
and in normal coordinates around
$x_0\in X$ as above, we get from the last line of \eqref{comput2}
and following the computations of \eqref{computexam1}
that the unsharpness cocycle $c_+^\rho$ associated with
$T_\hbar^\rho$ satisfies
\begin{equation}
\begin{split}
c_+^\rho(u,v)(x_0)&=-\frac{1}{2}\sum_{j=1}^{2d}\left(
\partial_j u(x_0)\,\partial_j v(x_0)
+\frac{\lambda_j}{\pi}\,\partial_j u(x_0)\,\partial_j v(x_0)\right)\\
&=-\frac{1}{2}\big(g_{x_0}(\nabla_g u,\nabla_g v)-
g_{x_0}(J\rho_gJ\nabla_g u,\nabla_g v)\big)\\
&=-\frac{1}{2}\big(g_{x_0}(\textup{sgrad}\,u,\textup{sgrad}\,v)+
\rho_{x_0}(\textup{sgrad}\,u,\textup{sgrad}\,v)\big)\,.
\end{split}
\end{equation}
This shows that $G=g+\rho$, as required.
\qed

\medskip

\section{Case study: $SU(2)$ - equivariant quantizations} \label{sec-equiv}

\begin{defin}
\label{defin-equiv}{\rm
Two Berezin-Toeplitz quantizations $T_\hbar$ and $T_\hbar'$
with  families of Hilbert spaces $\{H_\hbar\}$ and $\{H'_\hbar\}$, ${\hbar \in \Lambda}$, respectively,
are called {\it equivalent} if
there exists a sequence of unitary operators $U_\hbar: H_\hbar \to H'_{\hbar}$ such that for all $f\in C^\infty(M)$,
\begin{equation}\label{eq-equiv}
\|U_\hbar T_\hbar(f) U_\hbar^{-1} - T'_\hbar(f)\|_{op} = \bigo(\hbar^2)\;.
\end{equation}
}
\end{defin}
Observe that if two quantizations are equivalent, their unsharpness metrics coincide.
In this section we prove a converse statement in the context of $SU(2)$-equivariant quantizations of the two-dimensional sphere (see Section \ref{subsec-class} below for further discussion). We consider the standard K\"{a}hler metric on the two-sphere $S^2$  normalized
so that the total area equals $2\pi$. We denote by $L$ the line bundle dual to the tautological one, and by $H_k$ the $k+1$-dimensional space of holomorphic sections of its $k$-th tensor power $L^k$. One can identify $H_k$ with the space of homogeneous polynomials of two variables, so the group $SU(2)$ acts on $H_k$ via an irreducible unitary representation.
Furthermore, $SU(2)$ acts on the space of Hermitian operators $\cL(H_k)$ by conjugation.
On the other hand the space $C^\infty(S^2)$ carries the natural action of $SU(2)$
by the change of variables. A quantization $Q_\hbar: C^\infty(S^2) \to \cL(H_k)$,
$\hbar\in\Lambda:=\{1/k\}_{k\in\N}$,
is called {\it $SU(2)$-equivariant} if it intertwines the corresponding (real) representations.  For instance, the standard Berezin-Toeplitz quantization $T_\hbar$
sending $f \in C^\infty(S^2)$ to the multiplication by $f$ followed by the orthogonal projection to the space of holomorphic sections is  $SU(2)$-equivariant, and the same holds true for its images $T_\hbar^{(t)}$ under diffusion as defined in Example \ref{exam-2}.
Note that the quantizations $T_\hbar^{(t)}$ are pair-wise non-equivalent for different values of $t$ as the corresponding unsharpness metrics are different.

\begin{thm} \label{thm-equiv} Every  $SU(2)$-equivariant
quantization of $S^2$ is equivalent to $T_\hbar^{(t)}$ for some $t \geq 0$.
\end{thm}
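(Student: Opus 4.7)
The plan is to exploit $SU(2)$-representation theory to reduce an equivariant quantization to a sequence of scalar multipliers, identify the parameter $t$ via Theorem \ref{thm-main-new-1}(I), and match the first-order expansion of the multipliers with that of $T_\hbar^{(t)}$ using quasi-multiplicativity. First I decompose $C^\infty(S^2)=\bigoplus_{\ell\ge 0} V_\ell$ into spherical harmonics and $\cL(H_k)=\bigoplus_{\ell=0}^{k} V_\ell$ via spherical tensor operators. By Schur's lemma any $SU(2)$-equivariant $Q_\hbar$ is determined by scalars $\lambda_\ell(\hbar)$ for $\ell\le k$; writing $\tau_\ell(\hbar)$ for those of the standard Toeplitz quantization $T_\hbar$, I introduce the equivariant multiplier $M_\hbar\in\End(C^\infty(S^2))$ acting on $V_\ell$ by $m_\ell(\hbar):=\lambda_\ell(\hbar)/\tau_\ell(\hbar)$, so that $Q_\hbar=T_\hbar\circ M_\hbar$. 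The goal becomes to show $M_\hbar=e^{-t\hbar\Delta}+O(\hbar^2)$ for some $t\ge 0$.

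To identify $t$, I apply Theorem \ref{thm-main-new-1}(I): the unsharpness cocycle $c_+$ of $Q_\hbar$ is $SO(3)$-invariant by equivariance, hence its associated metric $G$ is a constant multiple of the round metric $g_{std}$. The decomposition $G=\omega(\cdot,J\cdot)+\rho$ with $J$ compatible, together with the uniqueness in Remark \ref{rem-rho}, forces $J$ to be the standard complex structure and $\rho=4t\cdot g_{std}$ for some $t\ge 0$, so that $G=(1+4t)\,g_{std}$ coincides with the unsharpness metric of $T_\hbar^{(t)}$ computed in Example \ref{exam-2}.

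The heart of the proof is to match the first-order expansion of $M_\hbar$. Positing $M_\hbar=I+\hbar A+O(\hbar^2)$ and comparing the quasi-multiplicativity (P3) of $Q_\hbar=T_\hbar M_\hbar$ with that of $T_\hbar$, the difference of their unsharpness cocycles, equal to $-2t\,g_{std}(\nabla f,\nabla g)$, combined with the identity $2g_{std}(\nabla f,\nabla g)=f\Delta g+g\Delta f-\Delta(fg)$, makes the $O(\hbar)$ terms balance to
\[
(A+t\Delta)(fg)=f\cdot(A+t\Delta)g+\bigl((A+t\Delta)f\bigr)\cdot g.
\]
Thus $A+t\Delta$ is a derivation of $C^\infty(S^2)$, hence a vector field, and by $SO(3)$-invariance it must vanish, giving $A=-t\Delta$. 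Since $H_k$ is $SU(2)$-irreducible, Schur's lemma makes any equivariant unitary scalar, so taking $U_\hbar=\id$ the equivalence reduces to $\|Q_\hbar(f)-T_\hbar^{(t)}(f)\|_{op}=O(\hbar^2)$, which follows from $M_\hbar=e^{-t\hbar\Delta}+O(\hbar^2)$ via the norm correspondence (P1).

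The main obstacle lies in legitimating the expansion $M_\hbar=I+\hbar A+O(\hbar^2)$ and controlling it uniformly across all spherical-harmonic components $V_\ell$ with $\ell\le 1/\hbar$. The axioms only provide $C^N$-uniform estimates in the test functions, so one must derive a recursion from (P3) applied to products $Y_{1,m}\cdot Y_{\ell,m'}$, whose decomposition via Clebsch--Gordan expresses $m_{\ell+1}(\hbar)$ in terms of $m_\ell(\hbar)$ and $m_{\ell-1}(\hbar)$, initialized by $m_0=1$ (from $Q_\hbar(1)=\id$) and $m_1=1+O(\hbar^2)$ (pinned down by the bracket correspondence (P2) applied to pairs of $Y_{1,m}$'s). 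Showing that this recursion yields $m_\ell(\hbar)=1-t\hbar\ell(\ell+1)+O(\hbar^2)$ with polynomial control in $\ell$, by exploiting the asymptotic agreement between the quantum Clebsch--Gordan coefficients and their classical counterparts, is where the essential analytic work resides.
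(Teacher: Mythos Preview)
Your strategy coincides with the paper's: Schur's lemma reduces $Q_\hbar$ to scalar multipliers $m_\ell(\hbar)=1+\alpha_{\hbar,\ell}$ on each $V_\ell$, the $SO(3)$-invariant unsharpness metric is $\mu\cdot g_{std}$ with $\mu=1+4t$, and the key step is the recursion for $\alpha_{\hbar,\ell}$ obtained by applying (P3) to products $P_1\cdot P_n$ (the paper works with Legendre polynomials, the zonal harmonics, rather than general $Y_{1,m}$). Your derivation argument ``$A+t\Delta$ is an $SO(3)$-invariant vector field, hence zero'' is an elegant repackaging, but as you yourself note it presupposes a uniform expansion $M_\hbar=I+\hbar A+O(\hbar^2)$ and thus does not bypass the recursion; the paper goes straight to the recursion and solves it explicitly, obtaining $\alpha_{\hbar,n}=-\tfrac{n(n+1)}{2}(\mu-1)\hbar+O_N(\hbar^2)$ with polynomial control in $n$.

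One concrete error in your outline: (P2) does \emph{not} pin down $m_1=1+O(\hbar^2)$. Since $\{Y_{1,m},Y_{1,m'}\}\in V_1$, comparing (P2) for $Q_\hbar$ and $T_\hbar$ only yields $m_1^2-m_1=O(\hbar)$, i.e.\ $m_1=1+O(\hbar)$; in fact $m_1=1-(\mu-1)\hbar+O(\hbar^2)$, which is not $1+O(\hbar^2)$ unless $t=0$. Fortunately this extra input is unnecessary: the (P3) recursion at $n=1$, together with $m_0=1$, already determines $m_1$ to the required accuracy, and the recursion then propagates. So drop the (P2) step and feed the recursion directly.
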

\begin{proof}
{\sc Step 1 (Applying Schur lemma):} Given any $SU(2)$-equivariant quantization $Q_\hbar$, pass to its complexification (denoted by the same letter)
$$Q_\hbar: C^\infty(S^2,\C) \to \cL(H_k) \otimes \C = H_k^* \otimes H_k\;.$$
On the one hand, $C^\infty(S^2,\C)$ splits into the direct sum of irreducible summands $V_j$, $j=0,1,\dots$ corresponding to the eigenspaces of the Laplace-Beltrami operator associated to the K\"{a}hler metric with the eigenvalue $2j(j+1)$, with each $V_j$ isomorphic to $H_{2j}$ as an $SU(2)$-representation. On the other hand
$$H_k^* \otimes H_k = H_{2k} \oplus H_{2k-2} \oplus \cdots \oplus H_0\;.$$
By the Schur Lemma, when $\hbar = 1/k$, we have that $Q_\hbar(V_j) \subset H_{2j}$
with respect to this decomposition, and furthermore there exists a constant $\alpha_{\hbar, j}\in\C$
such that, up to conjugation, we have
\begin{equation} \label{eq-QT}
Q_\hbar = (1+\alpha_{\hbar, j})T_\hbar \;\; \text{on}\;\; V_j\;.
\end{equation}

{\sc Step 2 (Legendre polynomials):} In what follows we introduce another parameter, $n \in \N$. We call a sequence $\{b_{\hbar,n}\}_{n\in\N}$
of the class $\bigo_N(\hbar^m)$ with $m,N \in \N$
if for some $c>0$ we have $|b_{\hbar,n}| \leq c\,\hbar^m (n+1)^N$ for all $n$. In the sequel, the dependence on $\hbar$ of such
sequences will be made implicit.
Denote by $P_n(z)$ the $n$-th Legendre polynomial considered as a function on the unit sphere
$S^2 = \{x^2+ y^2 +z^2 =1\}$ lying in $V_n$.  We write $\nabla$ for the gradient with respect to the standard metric on $S^2$ normalized so that the total area equals $2\pi$. The standard formulas for the Legendre polynomials (see e.g. formulas (43) and (44) in \cite{W}) readily
yield\;,
\begin{equation}\label{Leg-1}
P_1P_n = q_n P_{n+1} + r_nP_{n-1},\;\; q_n=\frac{n+1}{2n+1},\;\; r_n= 1-q_n\;,
\end{equation}
and
\begin{equation}\label{Leg-2}
(\nabla P_1, \nabla P_n) = s_n ( -P_{n+1} + P_{n-1}),\;\; s_n=\frac{2n(n+1)}{2n+1}\;.
\end{equation}
We shall use that there exists $c>0$ such that
\begin{equation}
\label{eq-Legendre-deriv}
\forall r \in \N \;\; \exists R\in \N\;: \|P_n\|_{C^r} \leq c(n+1)^R\;.
\end{equation}
This (with $R = r$) follows immediately from the general result about the growth
of $C^r$-norms of the Laplace-Beltrami eigenfunctions on Riemannian manifolds,
see \cite[Corollary 1.1]{BinXu}.
Using the fact that $\max_{x\in S^2} P_n=1$ by
\cite[Chapter 7, Theorem 17(i)]{Ka}, the norm correspondence property
(P1), which holds uniformly in $C^N$-norm for some $N \in \N$,
together with formula \eqref{eq-Legendre-deriv} implies
$$\|Q_\hbar(P_n)\|_{op} = 1-\bigo_N(\hbar),\;\; \|T_\hbar(P_n)\|_{op} = 1-\bigo_N(\hbar)\;.$$
Since $Q_\hbar(P_n) = (1+ \alpha_{\hbar,n})T_\hbar(P_n)$
by \eqref{eq-QT}, it follows that
\begin{equation}
\label{eq-alpha-O} \alpha_{\hbar,n} = \bigo_N(\hbar)\;.
\end{equation}
In the course of the proof, we shall increase the value of $N$ according to
our needs.

\medskip\noindent
{\sc Step 3 (Main calculation) :} Since $Q_\hbar$ is $SU(2)$-equivariant,
the corresponding unsharpness metric equals $\mu$ times the standard one,
for some constant $\mu \geq 1$. Thus the quasi-multiplicativity property
(P3), which holds uniformly in $C^N$-norm for some $N\in\N$,
together with formula \eqref{eq-Legendre-deriv} yields
\begin{equation}
\label{star-enhanced-3}
Q_\hbar(P_1)Q_\hbar(P_n) = Q_\hbar\left(P_1P_n- \frac{\mu}{2}\hbar(\nabla P_1,\nabla P_n) + \bigo_N(\hbar^2)\right)\;.
\end{equation}
At the same time
\begin{equation}
\label{star-enhanced-4}
T_\hbar(P_1)T_\hbar(P_n) = T_\hbar\left(P_1P_n- \frac{1}{2}\hbar(\nabla P_1,\nabla P_n) + \bigo_N(\hbar^2)\right)\;,
\end{equation}
mind that here $\mu$ is replaced by $1$.
By \eqref{eq-QT} we have
\begin{equation} \label{eq-QT-1}
Q_\hbar(P_i) = (1+\alpha_{\hbar, i})T_\hbar(P_i)\;.
\end{equation}
Identities \eqref{star-enhanced-3} and \eqref{star-enhanced-4} combined with
\eqref{Leg-1},\eqref{Leg-2} and \eqref{eq-QT-1} enable us to express $T_\hbar(P_1)T_\hbar(P_n)$ as a linear
combination of $T_\hbar(P_{n+1})$ and $T_\hbar(P_{n-1})$ in two different ways. The calculation
is straightforward, and we obtain the result:
\begin{equation}\label{eq-vsp-PP-1}
A_nT_\hbar(P_{n+1})+ B_n T_\hbar(P_{n-1}) = A'_nT_\hbar(P_{n+1})+ B'_n T_\hbar(P_{n-1})+ \bigo_N(\hbar^2)\;,
\end{equation}
for some $A_n,\,A_n'\in\C$ and $B_n,\,B_n'\in\C$, $n\in\N$,
where
$$B_n = (1+\alpha_{\hbar, 1})^{-1} (1+\alpha_{\hbar, n})^{-1}(1+\alpha_{\hbar,n-1})(r_n -\hbar\mu s_n/2)\;,$$
$$B'_n= r_n - \hbar s_n /2\;.$$
Projecting equation \eqref{eq-vsp-PP-1} to the space $H_{2n-2}$ (which contains $T_\hbar(V_{n-1})$) and using that the operator norm of
$T_\hbar(P_{n-1})$ is bounded away from zero (see Step 2), we get that
$$B_n - B'_n = \bigo_N(\hbar^2)\;.$$
By using \eqref{eq-alpha-O} and explicit expressions for $q_n,r_n,s_n$  we get
\begin{equation}\label{eq-recur-2}
\alpha_{\hbar,n-1} - \alpha_{\hbar, n} - \alpha_{\hbar, 1} = (n+1)(\mu-1)\hbar + \bigo_N(\hbar^2)\;.
\end{equation}
Substituting  $n=1$ into \eqref{eq-recur-2} we get that
$$\alpha_{\hbar,1}= -(\mu-1)\hbar + \bigo_N(\hbar^2)\;.$$
Now we get a recursive formula
$$\alpha_{\hbar,n} =  \alpha_{\hbar,n-1} -n(\mu-1)\hbar + \bigo_N(\hbar^2)\;.$$
Noticing that $(n+1)\,\bigo_N(\hbar^2)= \bigo_{N+1}(\hbar^2)$ and redefining $N \mapsto N+1$ we
conclude that
\begin{equation}\label{eq-alpha-complete}
\alpha_{\hbar, n} = - \frac{n(n+1)}{2}(\mu-1)\hbar + \bigo_N(\hbar^2)\;.
\end{equation}

{\sc Step 4 (Finale) :} Recall that $2n(n+1)$ is the eigenvalue of the Laplacian corresponding
to the eigenspace $V_n$. Let $V = \oplus_{n=0}^\infty V_n$ be the space of all finite linear combinations
of spherical harmonics. By norm correspondence (P1) and
formula \eqref{eq-alpha-complete},
for every $\phi_n \in V_n$ we have
$$Q_\hbar(\phi_n) = \left(1- \frac{n(n+1)}{2}(\mu-1)\hbar \right) T_\hbar (\phi_n) + \bigo_N(\hbar^2)\,\|\phi_n\|_{C^N}
$$ $$ =T_\hbar(e^{-t\hbar\Delta} \phi_n) + \bigo_N(\hbar^2)\,\|\phi_n\|_{C^N} = T^{(t)}_\hbar(\phi_n)+ \bigo_N(\hbar^2)\,\|\phi_n\|_{C^N}$$
with $t = (\mu-1)/4$ in Example \ref{exam-2}.

Take now any $f \in C^\infty(S^2)$, and decompose it by spherical harmonics: $f = \sum_n \phi_n$.
Since $f$ is smooth, the $C^N$-norms
$\|\phi_n\|_{C^N}$ decay faster than any power of $n$
as $n\to+\infty$, so that
$$\|Q_\hbar(f)- T_\hbar^{(t)}(f)\|_{op} \leq c\sum_{n\in\N} n^N \|\phi_n\|_{C^N}\hbar^2 \leq c'\hbar^2\;.$$
This shows that the quantizations $Q_\hbar$ and $T_\hbar^{(t)}$ are equivalent.
\end{proof}

\section{The unsharpness cocycle is of order $(1,1)$}
\label{sec-order-proof}

In this Section we prove Theorem \ref{thm-order}

\begin{proof}
For every $d\in\N$, we use the standard multi-index notation
$\alpha=(\alpha_1,\cdots,\alpha_d)\in\N^d$, where
for any sequence of symbols $x_1,\cdots, x_d$, we write
$x_\alpha:=x_{\alpha_1}\cdots x_{\alpha_d}$, so that in
particular $\alpha!:=\alpha_1!\cdots\alpha_d!\in\N$, and
write $|\alpha|:=\alpha_1+\cdots+\alpha_d$.

Note that by \eqref{eq-a},
to show that $c_+$ is a bi-differential operator of bi-degree
$(1,1)$, we need to show that
the differential operator $a$ contains only terms
of order $1$ and $2$. Note that $T_\hbar(1)=\id$ implies
$c_+(1,1)=0$, so that $a$ cannot contain terms of order $0$.
Let us show that $a$ cannot be of order $k>2$.

Assume by contradiction that $a$ is of order $k>2$. Let
$x_0\in X$ be the center of local coordinates
$(Z_1,\cdots,Z_{2n})\in U\subset\R^{2n}$ be such that
for all $f\in C^\infty(X,\R)$,
\begin{equation}\label{Ldef}
af(x_0)=\sum_{1\leq|\alpha|\leq k}a_\alpha\frac{\partial^{|\alpha|}f}
{\partial Z^\alpha}(x_0)\,,
\end{equation}
where the sequence $\{a_\alpha\in\R\}_{1\leq|\alpha|\leq k}$ is such that
$a_\beta\neq 0$ for some $\beta=(\beta_1,\cdots,\beta_{2n})\in\N^{2n}$
of length $|\beta|=k$.
Fix $1\leq j\leq 2n$ such that
$\beta_j\neq 0$, and writing
$\hat\beta=(\beta_1,\cdots,\beta_j-1,\cdots,
\beta_{2n})\in\N^{2n-1}$,
take $f\in C^{\infty}(X,\R)$ satisfying
\begin{equation}\label{fcoord}
f(Z)=\frac{c}{\hat\beta!}Z^{\hat\beta}+Z_j\,,
\end{equation}
for $Z\in U\subset\R^{2n}$ in the coordinates around
$x_0\in X$ considered above and for some $c\in\R$ to be fixed later.
Then this function $f$ and all its derivatives
vanish at $x_0\in X$, except for
\begin{equation}\label{fcoord-1}
\frac{\partial^{|\hat\beta|}f}
{\partial Z^{\hat\beta}}(x_0)=c\quad\quad\text{and}
\quad\quad\frac{\partial f}{\partial Z_j}(x_0)=1\,.
\end{equation}
Then by equations \eqref{fcoord} and \eqref{Ldef}, considering
the multi-index $\gamma\in\N^{2n}$
of length $|\gamma|=2$ such that $\gamma_j=2$,
one gets that for any $f\in C^\infty(X,\R)$ satisfying
\eqref{fcoord-1},
\begin{equation}
\begin{split}
c_+(f,f)(x_0)&=2a_\beta\,\frac{\partial^{|\hat\beta|}f}
{\partial Z^{\hat\beta}}(x_0)\,\frac{\partial f}{\partial Z_j}(x_0)
+2a_\gamma\frac{\partial f}{\partial Z_j}(x_0)\frac{\partial f}{\partial Z_j}(x_0)\\
&=2a_\beta c+2a_\gamma\,.
\end{split}
\end{equation}
Thus if $f\in C^\infty(X,\R)$ satisfies \eqref{fcoord-1}
for $c\in\R$ such that
$\text{sign}(a_\beta)c > -a_\gamma/|a_\beta|$, we get that $c_+(f,f)(x_0)> 0$.
This contradicts the fact that $c_+(f,f)\leq 0$ for all
$f\in C^\infty(X,\R)$, which holds for every  Berezin-Toeplitz quantization by the semi-positivity property
\eqref{eq-positivity}.
\end{proof}

\section{Discussion and questions}\label{sec-disc}

\subsection{Historical remarks on unsharpness}\label{subsec-ger} The unsharpness cocycle appeared in earlier literature which, to the best of our knowledge, focussed on its elimination, of course, by the price of losing the positivity of a quantization. Let us elaborate on this point.
Assume $(M,\omega)$ is a quantizable symplectic manifold equipped with
a  compatible almost-K\"{a}hler structure. Consider the induced
Berezin-Toeplitz quantization of Theorem \ref{BQ}.
Using the notations of Example \ref{exam-2}, define for any $f\in C^\infty(M)$ and
$\hbar\in\Lambda=\{1/k\}_{k\in\N}$,
\begin{equation}\label{metaKS}
Q_\hbar(f) := T_\hbar\left(f+\frac{\hbar}{4} \Delta f\right)\;.
\end{equation}
This gives rise to a collection of maps
$Q_\hbar:C^\infty(M)\to\cL(H_\hbar)$ parametrized by
$\hbar\in\Lambda$ and satisfying the axioms (P1)-(P4) of Definition
\ref{BTdef}, but which does not preserve positivity,
so that they do not come from a POVM construction via formula
\eqref{quantmap}. Then following the computation \eqref{computexam1}
in Example \ref{exam-1}, we see that the associated
unsharpness cocycle $c_+^Q$, defined from the quasi-multiplicativity property (P3) as in the beginning of the section,
satisfies
\begin{equation}\label{c+=0}
c_+^Q(f,g)=0\,,
\end{equation}
for all $f,\,g\in C^\infty(M)$. As noted for instance by Charles in
\cite[\S\,1.4]{Cha} \footnote{\cite{Cha} uses the holomorphic Laplacian, which is half the Laplace-Beltrami operator.}, the quantization \eqref{metaKS} is,  up to twisting with
a line bundle, the \emph{metaplectic Kostant-Souriau quantization}, which
possesses remarkable sub-principal properties, a fact which
is explained conceptually by the vanishing unsharpness
property \eqref{c+=0}.

In the flat case $M=\C$ with the standard symplectic form,
Gerstenhaber considers in \cite{G} deformation quantizations
parametrized by $\lambda\geq 0$
which, up to the second order in $\hbar$, correspond to the quantization maps parametrized
by $\hbar>0$ defined for any smooth function $f: \C \to \R$ of polynomial growth by
\begin{equation}\label{Gerquant}
Q_\hbar^{(\lambda)}(f) := T_\hbar\left(f+ \frac{1-\lambda}{2}\hbar
\Delta f\right)\;.
\end{equation}
Here $T_\hbar$ is the standard Toeplitz quantization of $\C$,
sending $f$ to the multiplication by $f$
followed by the orthogonal $L_2$-projection
on the space of
holomorphic functions which are
square integrable with respect to a Gaussian
measure. Gerstenhaber formulates a \emph{least uncertainty principle}
for deformation quantization, which implies in particular
that unsharpness vanishes on the classical harmonic oscillator.
He then shows
that the quantization \eqref{Gerquant} satisfies this least
uncertainty principle if $\lambda=1/2$, which corresponds
to the flat version of the quantization \eqref{metaKS}.

Note that in the flat case $M=\C$, the classical harmonic oscillator is
a sum of squares of the coordinate functions. On the other hand,
the quasi-multiplicativity property (P3) implies that
for all $f\in C^\infty(M)$ as $\hbar\to 0$,
$$T_\hbar(f)^2-T_\hbar(f^2)=\hbar T_\hbar(c_+(f,f))+\bigo(\hbar^2)\,.
$$
We then see that unsharpness measures in particular
the deviation of the quantum harmonic oscillator, defined
as a sum of squares of the quantum coordinate operators, from
the quantization of the classical harmonic oscillator.
This explains in particular the standard justification of the
metaplectic correction, as giving the "correct" quantum harmonic
oscillator on flat space.

\subsection{Least unsharpness surfaces and pseudo-holomorphic curves}
Let $G$ be the unsharpness metric associated to
a Berezin-Toeplitz quantization of a closed symplectic manifold $(M,\omega)$
(see Section \ref{sec-main}). {\it A least unsharpness  surface} $\Sigma \subset M$ is
a two-dimensional oriented submanifold with $\text{Area}_G (\Sigma) = \int_\Sigma \omega$.
Repeating the the proof of Theorem \ref{thm-main-new-1} we see that for such surfaces, the restriction of the Riemannian area form coincides with the restriction of the symplectic form. If $G$ has the minimal possible total unsharpness  and hence by Theorem  \ref{thm-main-new-1}\,(II) comes from some
compatible almost-complex structure $J$ on $M$, the least unsharpness  surfaces in $M$
are $J$-holomorphic curves (cf. \cite{S}).  For instance, for the complex projective plane $M= \C P^2$, Gromov's theory of pseudo-holomorphic curves predicts that for every compatible $J$,  through every two distinct points $A,B \in M$ passes unique such curve $\Sigma$ in the homology class of $[\C P^1]$.

It is enticing to interpret $\Sigma$ as a worldsheet of the topological string theory describing a path joining constant loops $A$ and $B$. Note that the metric $G$ on our ``space-time" $M$ is canonically associated to a Berezin-Toeplitz quantization of $M$, and the ``total unsharpness " $\text{Area}_G (\Sigma)$ of a worldsheet $\Sigma$ is nothing else but the Nambu-Goto action up to a multiplicative constant. If the total unsharpness  of $(M,G)$ is minimal possible, i.e., coincides with the symplectic volume of $M$, the least unsharpness  surfaces are $J$-holomorphic curves for a compatible almost complex structure $J$ defining $G$, and hence represent ``worldsheet instantons". Does there exist an interpretation of this
picture in physical terms?

\subsection{On classification of Berezin-Toeplitz quantizations}\label{subsec-class}
We conclude the paper with a discussion on classification of Berezin-Toeplitz quantizations
up to equivalence in the sense of Definition \ref{defin-equiv}. In Section \ref{sec-equiv} we classified $SU(2)$-equivariant quantizations of the two-dimensional sphere. It would be interesting to extend this to equivariant quantizations for more general co-adjoint orbits equipped with the canonical symplectic structure. In the general (not necessarily equivariant) case, the
problem is widely open.

In fact,  establishing (non)-equivalence of quantizations is a non-trivial problem even  for the K\"{a}hler quantizations of Theorem \ref{BQ}, where the holomorphic line bundles defining the quantization of $(M,\om)$ could be non-isomorphic. For instance, their Chern classes could differ by torsion even though the associated spaces of holomorphic sections have same dimension. Are the corresponding quantizations equivalent?

Another interesting example is as follows.  According to Remark \ref{rem-rho},
there exist metrics $G$ on $M$ admitting different decompositions of the form \eqref{metric}.
Each such decomposition determines a Berezin-Toeplitz quantization given by almost-K\"{a}hler
quantization followed by diffusion, as explained in the proof of Theorem \ref{thm-main-new-1}\,(III). Are the quantizations corresponding to different decompositions
of the same metric equivalent?

Let us address the question about invariants of quantizations with respect to equivalence.
In addition to the unsharpness metric,  there is another invariant coming from the trace correspondence, see item (P4) in Definition \ref{BTdef}. Recall that the latter states that
$$  \tr(T_\hbar(f)) = (2\pi\hbar)^{-d}\int_M f\,R_\hbar\, d\mu\;,$$
where $\dim M = 2d$, the function $R_\hbar\in C^\infty(M)$ satisfies $R_\hbar = 1+ \bigo(\hbar)$, and $d\mu = \frac{\omega^d}{d!}$ is the symplectic volume on $M$. Roughly speaking, since the trace is invariant under conjugation, the convergence rate of the sequence of differential forms $R_\hbar d\mu$ to the symplectic volume $d\mu$ as $\hbar \to 0$ does not change up to $\bigo(\hbar^2)$ under equivalence. For the sake of simplicity, let us, until the end of the paper,
enhance axiom (P4) by assuming that there exists a function $r \in C^\infty(M)$
such that $$R_\hbar = 1 + \hbar\,r + \bigo(\hbar^2)\;.$$ We shall refer to the form $rd\mu$ as {\it the Rawnsley form}. Thus, {\it equivalent quantizations possess the same Rawnsley form}. Put
$$\langle r \rangle := \text{Vol}(M)^{-1} \int_M r d\mu\;.$$

\begin{rem}{\rm Substituting $f=1$ into (P4), we get that
$\langle r \rangle $ appears in the dimension formula
$$\dim H_\hbar = \text{Vol}(M)(2\pi \hbar)^{-d}\left(1 +
\hbar\,\langle r \rangle + \bigo(\hbar^2)\right)\;.$$
Let us mention that for K\"{a}hler quantizations, an alternative asymptotic expression for the
dimension of $H_\hbar$ is given by the Hirzebruch-Riemann-Roch theorem. Comparing
coefficients at $\hbar$ one gets a simple topological interpretation of $\langle r \rangle$:
$$\langle r \rangle = 2\pi\,\frac{\left\langle [\omega]^{d-1}\cup c_1(TM), [M]\right\rangle}{2(d-1)!  \text{Vol}(M)} \;,$$
where $c_1(TM)$ stands for the first Chern class of $M$. }
\end{rem}

\begin{exam}
\label{exam-rawn}
{\rm Let $v$ be a vector field on the manifold $M$ generating a flow $\phi_t$.
Given a Berezin-Toeplitz quantization $T_\hbar$ on $M$, define a new quantization
by setting $T^{(v)}_{\hbar}(f) : = T_\hbar(f\circ \phi_{-\hbar})$. A direct calculation
based on the expansion
$T^{(v)}_{\hbar}(f) = T_\hbar\left(f -\hbar\,df(v)\right) + \bigo(\hbar^2)$
shows that this is a Berezin-Toeplitz quantization whose unsharpness metric coincides with
the one of $T_\hbar$, and whose Rawnsley form is given by
$(r + \text{div} (v))d\mu$,
where $\text{div} (v)$ stands for the divergence of $v$ with respect to the symplectic volume.
In particular, it follows that by choosing an appropriate vector field $v$, one can always
achieve the Rawnsley form being equal to $\langle r \rangle$.
}
\end{exam}


\begin{question}\label{question-3} {\rm Consider a pair of quantizations with
the Hilbert spaces of the same dimension. Suppose that their unsharpness metrics
and the Rawnsley forms coincide. Are these quantizations equivalent?}
\end{question}

The answer in the general (not necessarily equivariant) case is at the moment unclear.

\medskip
\noindent
{\bf Acknowledgement.} We thank Pavel Etingof for useful references. We are grateful to Jordan Payette for pointing out a number of inaccuracies and helpful suggestions on the presentation.

\medskip

\Addresses

\end{document}